\declaretheorem[style=plain]{theorem}
\declaretheorem[style=plain,sibling=theorem]{lemma}
\declaretheorem[style=remark,sibling=theorem]{claim}
\def\@parfont{\bfseries}
\def\paragraph#1{\subsubsection*{#1.}}
\renewcommand\subsubsection{\@startsection{paragraph}{4}{\z@}%
  {-.5\baselineskip \@plus -2\p@ \@minus -.2\p@}%
  {-3.5\p@}%
  {\ACM@NRadjust{\@parfont}}}
\newif\ifsavetrees\savetreestrue
\renewcommand\subsubsection{\@startsection{paragraph}{4}{\z@}%
  {-.25\baselineskip \@plus -1\p@ \@minus -.1\p@}%
  {-1.75\p@}%
  {\ACM@NRadjust{\@parfont}}}
\let\markeverypar\everypar
\newtoks\everypar
\newcommand{\DeclareMathActive}[2]{%
  \expandafter\edef\csname keep@#1@code\endcsname{\mathchar\the\mathcode`#1 }
  \begingroup\lccode`~=`#1\relax
  \lowercase{\endgroup\def~}{#2}%
  \AtBeginDocument{\mathcode`#1="8000 }%
}
\newcommand{\std}[1]{\csname keep@#1@code\endcsname}
\patchcmd{\newmcodes@}{\mathcode`\-\relax}{\std@minuscode\relax}{}{\ddt}
\def\cO{\mathcal{\std{O}}}
\def\dcmnumberstyle{}
\def\dcmbasicstyle{}
\newif\ifdcmlinenumbers
\crefname{lst@lineno}{Line}{Lines}
\crefname{algoflt}{Algorithm}{Algorithms}
\def\Varname#1{\texttt{#1}}
\def\Vardef#1{%
	\expandafter\newcommand\csname #1\endcsname[1]{%
		\def\first{##1}%
		\def\second{*}%
		\def\third{}%
		\ensuremath{\Varname{\MakeLowercase #1}\ifx\first\second\else\ifx\first\third\else[{##1}]\fi\fi}%
	}%
}
\def\Vardefxx#1#2{%
	\expandafter\newcommand\csname #1\endcsname[1]{%
		\def\first{##1}%
		\def\second{*}%
		\def\third{}%
		\ensuremath{\Varname{#2}\ifx\first\second\else\ifx\first\third\else[{##1}]\fi\fi}%
	}%
}
\def\Vardefx#1#2{%
	\expandafter\newcommand\csname #1\endcsname[1]{%
		\ensuremath{\Varname{#2}[{##1}]}%
	}%
}
\def\KWdef#1{\expandafter\def\csname #1\endcsname{\ensuremath{\Varname{\MakeLowercase #1}}\xspace}}
\def\ttrue{\ensuremath{\text{\textsc{true}}}\xspace}
\def\tfalse{\ensuremath{\text{\textsc{false}}}\xspace}
\def\?#1{}
\def\whp{w.h.p\@ifnextchar.{.\?}{\@ifnextchar,{.}{\@ifnextchar:{.}{\@ifnextchar){.}{.\ }}}}}
\def\Whp{W.h.p\@ifnextchar.{.\?}{\@ifnextchar,{.}{.\ }}}
\def\protocol#1{\textsc{#1}\xspace}
\def\AlgSimple{\protocol{SimpleAlgorithm}}
\def\AlgComplex{\protocol{ImprovedAlgorithm}}
\def\FormJunta{\protocol{FormJunta}}
\title{Population Protocols for Exact Plurality Consensus}
\author{Gregor Bankhamer}
\email{gbank@cs.sbg.ac.at}
\affiliation{%
    \institution{University of Salzburg}
    \city{Salzburg}
    \country{Austria}
}
\author{Petra Berenbrink}
\email{petra.berenbrink@uni-hamburg.de}
\affiliation{%
    \institution{Universität Hamburg}
    \city{Hamburg}
    \country{Germany}
}
\author{Felix Biermeier}
\email{felix.biermeier@uni-hamburg.de}
\affiliation{%
    \institution{Universität Hamburg}
    \city{Hamburg}
    \country{Germany}
}
\author{Robert Els{\"a}sser}
\email{elsa@cs.sbg.ac.at}
\affiliation{%
    \institution{University of Salzburg}
    \city{Salzburg}
    \country{Austria}
}
\author{Hamed Hosseinpour}
\email{hamed.hosseinpour@uni-hamburg.de}
\affiliation{%
    \institution{Universität Hamburg}
    \city{Hamburg}
    \country{Germany}
}
\author{Dominik Kaaser}
\email{dominik.kaaser@tuhh.de}
\affiliation{%
    \institution{TU Hamburg}
    \city{Hamburg}
    \country{Germany}
}
\author{Peter Kling}
\email{peter.kling@uni-hamburg.de}
\affiliation{%
    \institution{Universität Hamburg}
    \city{Hamburg}
    \country{Germany}
}
\begin{document}
	\newcommand*{\abstractcite}[1]{[\citeauthor{#1}; \citeyear{#1}]}

\begin{abstract}
We consider the \emph{plurality consensus} problem for \emph{population protocols}.
Here, $n$ anonymous agents start each with one of $k$ opinions.
Their goal is to agree on the initially most frequent opinion (the \emph{plurality opinion}) via random, pairwise interactions.
\emph{Exact} plurality consensus refers to the requirement that the plurality opinion must be identified even if the \emph{bias} (difference between the most and second most frequent opinion) is only $1$.

The case of $k = 2$ opinions is known as the \emph{majority problem}.
Recent breakthroughs led to an always correct, exact majority population protocol that is both time- and space-optimal, needing $O(\log n)$ states per agent and, with high probability, $O(\log n)$ time~\abstractcite{DBLP:journals/corr/abs-2106-10201}.
Meanwhile, results for general plurality consensus are rare and far from optimal.
We know that any always correct protocol requires $\Omega(k^2)$ states, while the currently best protocol needs $O(k^{11})$ states~\abstractcite{DBLP:conf/ciac/NataleR19}.
For ordered opinions, this can be improved to $O(k^6)$~\abstractcite{DBLP:conf/opodis/GasieniecHMSS16}.

We design protocols for plurality consensus that beat the quadratic lower bound by allowing a negligible failure probability.
While our protocols might fail, they identify the plurality opinion with high probability even if the bias is $1$.
Our first protocol achieves this via $k-1$ tournaments in time $O(k \cdot \log n)$ using $O(k + \log n)$ states.
While it assumes an ordering on the opinions, we remove this restriction in our second protocol, at the cost of a slightly increased time $O(k \cdot \log n + \log^2 n)$.
By efficiently pruning insignificant opinions, our final protocol reduces the number of tournaments at the cost of a slightly increased state complexity $O(k \cdot \log\log n + \log n)$.
This improves the time to $O(n / x_{\max} \cdot \log n + \log^2 n)$, where $x_{\max}$ is the initial size of the plurality.
Note that $n/x_{\max}$ is at most $k$ and can be much smaller (e.g., in case of a large bias or if there are many small opinions).
\end{abstract}

	\maketitle
	\section{Introduction}%
\label{sec:introduction}

In this paper, we design and analyze a new population protocol for \emph{plurality consensus}, a fundamental problem in distributed computing.
There are $n$ \emph{agents}, each starting with one of $k$ \emph{opinions}.
The goal is that all agents eventually agree on the initially most frequent opinion.

The \emph{population protocol model}~\cite{DBLP:journals/dc/AngluinADFP06} has become a popular way to study distributed systems formed by many simple, resource-limited agents.
A key feature of the model is that communication is erratic: agents cannot choose their communication partners, but instead each time step one (typically random) pair of agents is chosen to interact.
During such an \emph{interaction}, both agents observe each other's state and use a common transition function to update their respective states.
The random communication and simple state updates make the model particularly suitable for systems of many, simple entities whose communication patterns seem unpredictable, like chemical reactions~\cite{DBLP:journals/nc/SoloveichikCWB08, DBLP:conf/soda/Doty14, DBLP:journals/dc/ChenCDS17}, gene regulatory networks~\cite{ISBN:9780262524230}, animal populations~\cite{DOI:10.1093/icesjms/3.1.3}, or opinion formation in social groups~\cite{10.1007/bf03160228}.

Population protocols are compared with respect to their \emph{space complexity} (measured in states per agent) and \emph{how many interactions} they require to reach and stay in a desired \emph{configuration} (a global system state, like all agents agreeing on one opinion or one agent being in a leader state).
We express time bounds in the standard notion \emph{parallel time}, which is the number of interactions divided by $n$.
Thus, in expectation each agent takes part in $\ldauTheta{1}$ interactions per time unit.

The original model~\cite{DBLP:journals/dc/AngluinADFP06} restricts the number of states per agent to a constant with respect to the population size~$n$.
The computational power and limits of such constant-state population protocols are well understood~\cite{DBLP:journals/dc/AngluinAER07}, at least for \emph{stable} protocols (which must \emph{always}, with probability $1$, reach and stay in a desired configuration).
While the picture is less clear for protocols whose state space grows with $n$, recent breakthrough results managed to design stable protocols for \enquote{benchmark} problems like \emph{leader election} and \emph{majority} that are simultaneously time- and space-optimal~\cite{DBLP:conf/stoc/BerenbrinkGK20, DBLP:journals/corr/abs-2106-10201}.

\paragraph{Plurality Consensus in Population Protocols}
In the following, \emph{plurality opinion} refers to the opinion with the initially largest support (assuming it is unique) and \emph{bias} denotes the difference of that opinion's initial support to that of the second largest opinion.
A major part of research seeks to identify this plurality opinion for \emph{any} initial bias, even if it is only $1$.
This is often referred to as the \emph{exact} plurality problem.
In contrast, the \emph{approximate} plurality problem requires to identify the plurality opinion only if the initial bias is large enough (typically of order $\ldauomega{\sqrt{n}}$).

The \emph{majority problem} is a special case of plurality consensus, considering only $k = 2$ initial opinions.
As a fundamental problem in distributed computing, a lot of work has been invested to find an (asymptotically) optimal, stable population protocol for exact majority~\cite{DBLP:conf/podc/AlistarhGV15, DBLP:conf/soda/AlistarhAG18, DBLP:conf/wdag/BerenbrinkEFKKR18, DBLP:conf/podc/NunKKP20, DBLP:journals/dc/BerenbrinkEFKKR21}, culminating in~\cite{DBLP:journals/corr/abs-2106-10201}, which solves majority using both $\ldauOmicron{\log n}$ states and expected parallel time.
This is optimal, in that no stable protocol can solve majority faster ($\ldauOmega{\log n}$ is the time until each agent interacted at least once) and any polylogarithmic-time stable majority protocol requires $\ldauOmega{\log n}$ states (under two natural conditions, see~\cite{DBLP:conf/soda/AlistarhAG18}).
Note that the difficulty here stems from requiring \emph{exactness}; for \emph{approximate} majority, a simple 3-state protocol identifies the majority \whp in parallel time $\ldauOmicron{\log n}$ if the initial bias is $\ldauOmega{\sqrt{n \log n}}$~\cite{DBLP:journals/dc/AngluinAE08, DBLP:journals/nc/CondonHKM20}.
Focusing on constant-state protocols that might fail, \cite{conf/podc/KosowskiU18} mentions a protocol with constant state space and which \whp determines the exact majority in time $\ldauOmicron{\log^3 n}$.

Population protocols for general plurality consensus are scarce.
One line of research studies the state complexity (ignoring time) required to \emph{always} identify the plurality opinion.
While one needs at least $k$ states to represent $k$ opinions, \textcite{DBLP:conf/ciac/NataleR19} show that always correct plurality consensus requires even $\ldauOmega{k^2}$ states.
The currently best always correct protocol needs $\ldauOmicron{k^{11}}$ states, which can be reduced to $\ldauOmicron{k^6}$ if there is a total ordering on the opinions~\cite{DBLP:conf/opodis/GasieniecHMSS16}.
The quadratic lower bound makes it apparent that \emph{always} guaranteeing a correctly identified plurality opinion comes at the cost of high space complexity.
Sacrificing these strong guarantees, \cite{10.1137/1.9781611977073.135} reaches consensus \whp in $\ldauOmicron{\log^2 n}$ parallel time using only $k \cdot \ldauOmicron{\log n}$ states.
However, they only consider \emph{approximate} plurality consensus, requiring an initial bias of order $\ldauOmega{\sqrt{n \log n}}$.

\paragraph{Our Results in a Nutshell}
We present new population protocols for plurality consensus.
Our protocols may fail with negligible probability, allowing us to beat the quadratic lower bound on the state space~\cite{DBLP:conf/ciac/NataleR19} (our bounds are almost linear in $k$).
In contrast to~\cite{10.1137/1.9781611977073.135} our protocols remain exact: w.h.p.\footnote{The expression \emph{with high probability (w.h.p.)} refers to a probability of $1 - n^{-\LDAUOmega{1}}$.} they identify the plurality opinion, even if the initial bias is $1$.
Our first protocol relies on ordered opinions to eliminate non-plurality opinions in $k-1$ tournaments.
The second protocol works similarly but removes the need for ordered opinions at the cost of a slightly larger runtime.
Our final protocol allows for a significant speedup by quickly removing insignificant opinions before the tournaments start.

	\section{Model and Results}%
\label{sec:model_and_results}

We consider a system of $n$ identical, anonymous \emph{agents} (finite state machines) with state space $Q$ (whose size may depend on $n$).
In every time step, one pair of agents $(u, v)$ is chosen independently and uniformly at random to interact.
During such an \emph{interaction}, both agents update their states according to a common transition function $\delta\colon Q \times Q \to Q \times Q$.

In the \emph{plurality consensus} problem, each agent starts with one opinion out of a set $\cO$ of $k$ \emph{opinions}, which may or may not be totally ordered.
If $\cO$ is ordered, we assume w.l.o.g.\ that $\cO = \set{1, 2, \dots, k}$.
The \emph{bias} is the difference between the support of the most and second most frequent opinion.
Assuming that the initial bias is at least $1$, we call the initially most frequent opinion the \emph{plurality opinion}.
The goal is for all agents to output the initial plurality opinion.

We represent the initial \emph{distribution of opinions} as a vector $\bm{x} = \intoo{x_i}_{i \in \cO}$, where $x_i$ denotes the number of agents that initially have opinion $i$.
Additionally, $x_{\max} = \max_{i \in \cO}\set{x_i}$ is the initial support of the plurality opinion.
A \emph{configuration} describes the global system state at a given time (e.g., by stating how many agents are in each possible state).

\smallskip

We design protocols that, \whp, identify the plurality opinion quickly and have an almost optimal space complexity, even if the initial bias is only $1$ (hence we solve \emph{exact} plurality consensus).
With this goal, allowing a negligible failure probability is essential, as otherwise – independently of the runtime – any protocol requires $\ldauOmega{k^2}$ states~\cite{DBLP:conf/ciac/NataleR19}.

Our first protocol uses $\ldauOmicron{k + \log n}$ states.
It consists of $k-1$ tournaments, during each of which a \emph{defender} and \emph{challenger} opinion compete.
\Whp, the plurality opinion emerges victorious from all tournaments in time $\ldauOmicron{k \cdot \log n}$.
This protocol relies on an ordering of the opinions to determine the next challenger opinion.
Our second protocol avoids the requirement of such an order by using instead a \emph{leader election} subprotocol to determine the next challenger opinion.
Using the leader election protocol from~\cite{DBLP:journals/jacm/GasieniecS21} for this,\footnote{%
    \Whp, that protocol finishes in $\ldauOmicron{\log^2 n}$ time, leading to the corresponding term in our increased runtime.
    While there is a $\ldauOmicron{\log n}$ time leader election protocol~\cite{DBLP:conf/stoc/BerenbrinkGK20}, that runtime holds only in expectation, which is too weak for our purpose.
} our protocol for unordered opinions still uses $\ldauOmicron{k + \log n}$ states but has a slightly increased runtime of $\ldauOmicron{k \cdot \log n + \log^2 n}$.
By itself, avoiding such an ordering might seem like an esoteric challenge, but this approach plays a crucial role in our third protocol (see below), where it is used to perform tournaments only for a subset of a priori unknown (significant) opinions that remain after an initial pruning phase. The following theorem states the properties of our first two protocols.
\begin{theorem}%
\label{thm:result-order}
Assume we have a population of size $n$ with $k \leq n/40$ initial opinions.
\begin{enumerate}[nosep]
\item If the opinions are numbered $1, \dots, k$ then \AlgSimple converges \whp to the plurality opinion in $\ldauOmicron{k \cdot \log n}$ parallel time using $\ldauOmicron{k + \log n}$ states.
\item If there is no order among the opinions, \AlgSimple can be modified to converge \whp to the initial plurality opinion in $\ldauOmicron{k \cdot \log n + \log^2 n}$ parallel time using $O(k + \log n)$ states.
\end{enumerate}
\end{theorem}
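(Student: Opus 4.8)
\emph{Proof idea.} The plan is to run a phase clock that partitions the execution into $k-1$ \emph{tournament} phases of length $\Theta(\log n)$ each, to prove that a single tournament is correct \whp, and to close with a union bound over the $k-1$ tournaments. Concretely, I would first install the scaffolding --- a junta-formation routine (\FormJunta) followed by a junta-driven phase clock (\PhaseClock) --- so that every agent agrees, up to a small additive error, on the current phase boundary; I choose the phase length $c\log n$ for a large constant $c$ so that, \whp, every subroutine run inside a phase (the tournament dynamics and the one-way epidemics that broadcast their outcomes) finishes before the phase ends despite the $O(1)$-factor drift of the clock. The hypothesis $k\le n/40$ is what guarantees these auxiliary components, and the epidemics, have enough participating agents to behave \whp. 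Throughout I maintain the invariant that tournament $i$ is played between a \emph{defender} opinion $d_i$ (the survivor of tournaments $1,\dots,i-1$, with $d_1=1$) and a \emph{challenger} opinion $c_i$, which in the ordered protocol is simply $c_i=i+1$.

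A tournament between a defender $d$ and a challenger $c$ is an exact-majority computation on the two colors $d$ and $c$: agents with opinion $d$ enter with weight $+1$, those with opinion $c$ with weight $-1$, all others with weight $0$; they run the cancellation-and-rebalancing dynamics of an optimal exact-majority primitive in the spirit of~\cite{DBLP:journals/corr/abs-2106-10201} for one phase, after which the outcome (defender wins, challenger wins, or tie) is frozen and epidemic-broadcast so that every agent learns it; on a tie the defender is kept. The crucial lemma --- and the only place where the admissible failure probability is spent --- is that with $O(\log n)$ states and $O(\log n)$ time this computes $\operatorname{sign}(x_d-x_c)$ correctly \whp even when $\lvert x_d-x_c\rvert\le 1$; I would prove it by a potential/martingale argument showing that the dynamics preserve the sign of the weighted sum while the bounded-precision rebalancing fails only with probability $n^{-\LDAUOmega{1}}$. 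I would set the losing opinion's agents to become weight-$0$ followers of the winner (never converted into the winner's opinion), so that every opinion's support is non-increasing over the whole run and the defender $d$ always has support exactly $x_d$.

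Stitching is then short: by the single-tournament lemma and a union bound over the $k-1\le n$ tournaments, \whp every tournament returns the correct outcome, and together with support-monotonicity this forces the defender after tournament $i$ to maximize $x_j$ over $j\in\{1,\dots,i+1\}$. Since the bias is at least $1$, the plurality opinion is the unique global maximizer, so once it enters a tournament it wins strictly and remains the defender; hence after tournament $k-1$ the defender is the plurality opinion, and a final $O(\log n)$-time epidemic lets every agent output it and settle into a stable configuration, for a total of $(k-1)\cdot O(\log n)+O(\log n)=O(k\log n)$ time. For the state bound I would design the encoding so that the $O(k)$ opinion labels and the $O(\log n)$-state clock-and-tournament gadget form a tagged union rather than a product --- an agent actively running a tournament's cancellation counters need not simultaneously carry a full label, the champion's label being preserved by a few tagged agents --- giving $O(k+\log n)$ states; verifying that the transition function respects this is tedious but routine. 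I expect the single-tournament lemma, in the exact regime with $\lvert x_d-x_c\rvert\le 1$ under the simultaneous $O(\log n)$-state and $O(\log n)$-time budget, to be the main obstacle: one must keep the bounded-precision cancellation from ever flipping the sign except with probability $n^{-\LDAUOmega{1}}$, and fit the computation together with its broadcast inside a single phase.

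For part~2 the only change is that $c_i$ can no longer be read off as $i+1$; instead I would select the next challenger with the help of a unique leader obtained from a single invocation of the leader-election protocol of~\cite{DBLP:journals/jacm/GasieniecS21}, after which each successive challenger can be extracted by an $O(\log n)$-time process among the still-alive non-defender opinions. That single invocation runs in $O(\log^2 n)$ time \whp (which is exactly why a high-probability bound is needed here, rather than the $O(\log n)$-in-expectation protocol of~\cite{DBLP:conf/stoc/BerenbrinkGK20}) and adds only $O(\log\log n)$ states, so the bounds become $O(k\log n+\log^2 n)$ time and $O(k+\log n)$ states, while the rest of the analysis --- phase clock, single tournaments, union bound, monotonicity --- carries over verbatim.
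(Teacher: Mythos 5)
Your architecture matches the paper's: a phase-clock--synchronized sequence of $k-1$ defender-versus-challenger tournaments, each an exact two-opinion majority computation, closed by a union bound, with a single $O(\log^2 n)$-time leader election handling the unordered case. The genuine gap is in the step you defer as ``tedious but routine'': the $O(k+\log n)$ state bound. If the agents holding opinions $d$ and $c$ themselves run the $O(\log n)$-state exact-majority dynamics, each such agent must simultaneously carry its opinion label (needed to act as defender or future challenger later) and the majority protocol's state; that is a product of size $\Omega(k\log n)$, not a tagged union, and the same holds for the weight-$0$ bystanders, who must remember their own opinion for future tournaments. Your fix --- ``the champion's label being preserved by a few tagged agents'' --- is the right instinct, but it is exactly where the paper's real construction lives: an initialization phase consolidates each opinion's support into \emph{tokens} held by label-carrying \Collector-agents (at most $10$ tokens each, with the total token count per opinion conserved exactly, which is what makes the comparison exact at bias $1$), thereby freeing a constant fraction of agents to become anonymous \Player-agents that store only $3$ opinion values plus the majority states. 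Cancellation is then run on bounded counters $\ell\in[-10,10]$ among the \Collector-agents, and only the surviving imbalance is handed off, one unit per \Player-agent, to the majority protocol used as a black box. The hypothesis $k\le n/40$ is consumed precisely here: by Cauchy--Schwarz over the $k$ opinion classes, the probability that an interaction merges two tokens is $\Omega(1/k)$, which makes the initialization finish in $O(k+\log n)$ parallel time and leave at least $n/10$ agents in each role. Without this mechanism the single-tournament lemma you identify as the main obstacle cannot even be set up within the state budget; with it, no new martingale argument is needed, since exactness reduces to the published guarantees of the load-balancing and exact-majority subroutines.

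A second, smaller gap concerns part~2: the ``$O(\log n)$-time process'' by which your leader extracts the next challenger fails for an opinion of support $x_j=o(n)$, since the leader alone needs $\Omega(n/x_j)$ parallel time to meet such an agent. The paper has the $\Omega(n)$ \Tracker-agents copy not-yet-defeated opinions via one-way epidemics, amplifying every surviving opinion to a constant fraction of the population so that the leader can sample it within one $\Theta(\log n)$-length phase; some such amplification step is necessary for the claimed $O(k\log n+\log^2 n)$ bound.
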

Note, by tightening the analysis and slightly modifying  
\cref{alg:synchronization} our \cref{thm:result-order} also holds for $k \le n-1$. These extensions are given in \cref{sec:appendix-larger-k}. For constant values of $k$, the unmodified \AlgSimple converges \whp in optimal $\ldauOmicron{\log n}$ parallel time and requires only $\ldauOmicron{\log n}$ states. 
This matches state and time complexities of the state-of-the-art exact majority protocol \cite{DBLP:conf/focs/DotyEGSUS21}. Note that the protocol from \cite{DBLP:conf/focs/DotyEGSUS21} is stable but ours gives \whp guarantees only.
A detailed description and analysis of the \lcnamecref{thm:result-order}'s first statement can be found in \cref{sec:algorithm}.
Details for the \lcnamecref{thm:result-order}'s second statement are given in \cref{sec:removing-the-order}.

Our main contribution is the third protocol, which uses a pruning process to remove \emph{insignificant} opinions before the tournaments start, reducing their number from $k-1$ to $n / x_{\max}$ (remember that $x_{\max}$ denotes the initial size of the plurality opinion).
An opinion $j$ is called insignificant if $x_j \leq x_{\max} / c_s$, where $c_s > 1$ is a suitable constant.
If $x_{\max}$ is of order $n^{1/2 + \ldauOmega{1}}$, the resulting protocol \whp identifies the plurality opinion in parallel time $\ldauOmicron{n / x_{\max} \cdot \log n + \log^2 n}$, using a slightly larger state space of size $\ldauOmicron{k \cdot \log\log n + \log n}$.
Note that $n / x_{\max}$ is always at most $k$ but it may be much smaller (e.g., if one opinion is very large or if there are many small opinions).

The idea of the pruning process is to have each \emph{subpopulation of opinions} run through a few \emph{preprocessing phases} controlled by their own, dedicated \emph{phase clock}. Phase clocks \cite{DBLP:conf/soda/AlistarhAG18,DBLP:journals/dc/AngluinAE08a,DBLP:journals/dc/BerenbrinkEFKKR21,DBLP:journals/jacm/GasieniecS21} are a common tool in population protocols to synchronize agents into phases.
We will show that larger subpopulations finish their preprocessing phases earlier than smaller subpopulations.
Once the first subpopulation finished their preprocessing phases, we use a broadcast to prune any subpopulation (opinion) whose agents have not progressed far enough and then start the actual tournaments with the remaining opinions. Since we cannot know which and how many opinions remain after this pruning phase, we rely on the approach of our second protocol to select the next challenger opinion via a leader election subprotocol.

The following \lcnamecref{thm:result-no-order-phase-clock} states the results for our final population protocol formally.
\begin{theorem}%
\label{thm:result-no-order-phase-clock}
Assume we have a population of size $n$ with $k$ initial opinions where $x_{\max} > n^{1/2 + \varepsilon}$ for some small constant $1/2 > \varepsilon > 0$.
\AlgComplex converges \whp to the plurality opinion in $\ldauOmicron{ {n}/{x_{\max}} \cdot \log n + \log^2 n}$ parallel time using $\ldauOmicron{k \cdot \log\log n+ \log n}$ states.
\end{theorem}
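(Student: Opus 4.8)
The plan is to build \AlgComplex on top of the (unordered-opinion variant of the) protocol from \cref{thm:result-order}, prefixed by a pruning stage that runs in parallel for all opinion-subpopulations. First I would set up, for each opinion $i$, a dedicated phase clock shared only by the agents currently holding opinion $i$; these agents run a fixed number $\Theta(\log\log n)$ of \emph{preprocessing phases}, each phase being a juntas-driven synchronization that lets a subpopulation estimate its own (rough) size. The key quantitative fact to establish is a separation lemma: a subpopulation of size $s$ completes its $\ell$-th preprocessing phase after $\ldauTheta{\ell \cdot \log n / s \cdot n}$\,? — more precisely, after $\ldauTheta{(n/s)\cdot\log n}$ parallel time per phase up to constants, so that within $\ldauOmicron{(n/x_{\max})\cdot\log n}$ time the plurality subpopulation (and every subpopulation of size $\Omega(x_{\max})$) has finished all preprocessing phases, while any subpopulation of size at most $x_{\max}/c_s$ has provably not. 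Once the first subpopulation finishes, it triggers a global \Broadcast (costing $\ldauOmicron{\log n}$ time and $\ldauOmicron{\log n}$ states) that marks every opinion whose agents have not reached the final preprocessing phase as \emph{pruned}; pruned agents drop out of all subsequent tournaments. After pruning, at most $n/(x_{\max}/c_s) = \ldauOmicron{n/x_{\max}}$ opinions survive, and the plurality opinion is guaranteed to survive.

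Next I would invoke the unordered-opinion machinery: run the $(\text{surviving count})-1 = \ldauOmicron{n/x_{\max}}$ tournaments exactly as in the second protocol of \cref{thm:result-order}, using the leader-election subprotocol of~\cite{DBLP:journals/jacm/GasieniecS21} to pick the next challenger among surviving opinions, since we do not know a priori which opinions remain. By \cref{thm:result-order} each tournament is correct \whp and takes $\ldauOmicron{\log n}$ time, so the tournament stage costs $\ldauOmicron{(n/x_{\max})\cdot\log n + \log^2 n}$ time; combined with the $\ldauOmicron{(n/x_{\max})\cdot\log n}$ pruning stage and the $\ldauOmicron{\log n}$ broadcasts, the total is $\ldauOmicron{(n/x_{\max})\cdot\log n + \log^2 n}$ as claimed. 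For the state bound, each agent stores its opinion ($\ldauOmicron{1}$ per agent but $k$ globally), a phase-clock state and a preprocessing-phase counter of value $\ldauOmicron{\log\log n}$ (hence $\ldauOmicron{\log\log n}$ states, contributing the $k\cdot\log\log n$ term once we account that different opinions need distinguishable clock instances), plus $\ldauOmicron{\log n}$ states for the leader-election / tournament / broadcast components — giving $\ldauOmicron{k\cdot\log\log n + \log n}$ overall. The correctness of the tournament stage is inherited verbatim; what must be added is that the pruning stage never removes the plurality opinion and that, after a union bound over the $\ldauOmicron{k}$ phase clocks and $\ldauOmicron{\log\log n}$ phases, all size estimates are accurate to within the constant $c_s$.

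The main obstacle I expect is the separation lemma and, tied to it, the requirement $x_{\max} > n^{1/2+\varepsilon}$. Phase-clock synchronization among a subpopulation of size $s$ is driven by interactions \emph{within} that subpopulation, which happen at rate $\Theta(s^2/n^2)$ per step, i.e.\ $\Theta(s^2/n)$ per parallel time unit — but interactions of a subpopulation agent are dominated by meetings with the other $n-s$ agents, so the clock must be robust to a large volume of "foreign" interactions. Establishing that each preprocessing phase lasts $\ldauTheta{(n/s)\cdot\log n}$ time with a \emph{two-sided} concentration sharp enough to distinguish $s = x_{\max}$ from $s = x_{\max}/c_s$ after only $\Theta(\log\log n)$ phases is delicate; this is precisely where the $n^{1/2+\varepsilon}$ lower bound enters, since for smaller $x_{\max}$ the relevant interaction counts become too sparse for the required concentration over the whole run. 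A secondary subtlety is that opinions can in principle be relabeled or merged during tournaments, so the phase clocks must be frozen and the pruning decision committed before any tournament alters subpopulation membership; I would handle this by sequencing — the triggering broadcast both ends preprocessing globally and starts the tournament stage — and by a careful \whp argument that no two subpopulations' clocks interfere, analogous to the phase-clock arguments in~\cite{DBLP:conf/soda/AlistarhAG18,DBLP:journals/dc/BerenbrinkEFKKR21}.
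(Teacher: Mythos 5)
Your proposal follows essentially the same route as the paper: per-opinion junta-driven phase clocks advanced only by ``meaningful'' (same-opinion) interactions, a race in which the first subpopulation to finish its preprocessing triggers a broadcast that prunes every opinion whose clock has not ticked, and then the unordered-opinion tournament machinery with leader election from \cref{thm:result-order}. Your ``separation lemma'' is exactly the paper's \cref{lem:clock-subpopulation} (plus \cref{lem:small-population} for subpopulations below $\sqrt{n}$, which may fail to elect a junta at all), and your reading of where $x_{\max}>n^{1/2+\varepsilon}$ enters is correct.

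Two small slips are worth flagging. First, you run ``$\Theta(\log\log n)$ preprocessing phases,'' but each phase of a size-$s$ subpopulation costs $\Theta((n/s)\log n)$ parallel time, so $\Theta(\log\log n)$ of them would give pruning time $\Theta(\log\log n\cdot(n/x_{\max})\log n)$ and break your own bound; the paper uses a \emph{constant} number $c$ of phases, and the $\log\log n$ factor in the state bound comes solely from the internal states of the junta-election and phase-clock protocol (crossed with the $k$ possible opinions), not from the number of phases. Second, you leave open what pruned agents do: in the paper they are recycled into the \Clock, \Tracker, and \Player roles, which is what guarantees a constant fraction of agents in each role even when only $o(n)$ \Collector-agents survive pruning -- a fact needed so that the \Lineup phase can place all remaining tokens into \Player-agents even though the \Cancellation phase no longer balances over a linear-size population. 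Neither issue changes the architecture, but both need to be fixed for the time and correctness claims to go through as stated.
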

A description of the corresponding protocol and its analysis is given in \cref{sec:filtering}.
Note that if $k < n^{1/2 - \varepsilon}$, the requirement $x_{\max} > n^{1/2 + \varepsilon}$ is always fulfilled (this follows from $x_{\max} \geq n/k$).

	\section{The Simple Algorithm}%
\label{sec:algorithm}

In this section we present our first algorithm called \AlgSimple where each agent has one of $k$ possible opinions numbered from $1$ to $k$. 
The main idea of the protocol is as follows.
It performs a sequence of \emph{tournaments} of length $\ldauOmicron{\log n}$ synchronized by a \emph{phase clock} \cite{DBLP:conf/soda/AlistarhAG18}.
In each tournament two fixed opinions are chosen, and an exact majority protocol~\cite{DBLP:conf/focs/DotyEGSUS21} is used to determine the majority opinion among the two of them.
In the first tournament opinions $1$ and $2$ compete.
In tournament $i > 1$ the winner of tournament $i-1$ (called \emph{defender}) competes against opinion $i+1$ (called \emph{challenger}).
The winner of tournament $i$ has the largest support among the first $i+1$ opinions, and the winner of the last tournament is the plurality opinion. 

In order to reach our state bound of $\ldauOmicron{k+\log n}$, our protocol has to be very economical with the states.
For example, it is not possible for an agent to store two different opinions which would already require $\ldauOmega{k^2}$ states.
Our protocol starts with an initialization phase which splits the agents into four \emph{roles}: \Collector, \Player, \Clock, and \Tracker.
Every agent $u$ has a variable $\Role u$ to store its role in the protocol.
The protocol consists of an initialization part (see \cref{alg:init}) and three different subprotocols that are specific to the corresponding roles.

We already argued that every agent cannot store two different opinions.
Hence, the initialization phase is used to \enquote{collect} opinions: Initially each agent is a $\Collector$-agent for its initial opinion.
Each agent has a variable $\Tokens{}$ which can take on values between $1$ and $10$.
For every opinion, the total number of tokens of that opinion equals the number of agents initially supporting that opinion.
When a $\Collector$-agent meets another agent with the same opinion it increases the token counter accordingly.
This frees up the other agent which takes on a role in $\set{\Clock,\Tracker,\Player}$. 
During the tournament the $\Collector$-agents have the responsibility to initiate the majority protocols between the actual challenger and defender. 
To this end they have two Boolean variables \Defender{} and \Challenger{} which indicate that their opinion participates in the match as defender or challenger, respectively.
Additionally, all \Collector-agents have a bit \Winner{} which indicates the majority opinion of the last tournament. This bit is used to broadcast the final majority opinion.
Finally, a value $\ell \in [-10, 10]$ is used to cancel opposing opinions before a match.

Internally the $\Clock{}$ agents run the leaderless phase clock from \cite{DBLP:conf/soda/AlistarhAG18} on a local counter $\Count {}$ (see \cref{sec:clock-synchronization}).
Whenever the local counter passes through zero the agent increases a variable $\Phase{}$ modulo 10.
The new value is disseminated to all other agents via one-way epidemics.
The role of the $\Tracker$-agents is to store the number of the current challenger in a variable $\Tc{}$ (short for \emph{tournament counter}).
Whenever one of the tournaments is over this variable is increased by one.
This is used by the $\Collector$-agents to set the challenger bit at the beginning of a new tournament.
The $\Player$-agents are the ones performing the $k-1$ tournaments.
At the beginning of a tournament these agents adopt the opinions from $\Collector$-agents which have either the \emph{defender} or \emph{challenger} bit set and set their $\Playeropinion{}$ to $A$ or $B$, respectively.

\paragraph{Overview of the State Space}
We use $\cS_{\Maj}$ to denote the set of states used by the exact majority protocol.
\Cref{fig:state-space} gives an overview of the variables used by our protocol and how some of them can be attributed to the different roles.
\begin{figure*}
\fbox{\begin{minipage}{\textwidth}
{\setlength{\fboxsep}{2pt}
\newcommand*{\graybox}[2]{\raisebox{0.5em}{\colorbox{gray}{\parbox{#1}{\centering\color{white}#2}}}}
\[
\cS = 
\arraycolsep=0.5pt
\begin{array}{@{}ccccccccccccccccccc@{}}
 & & & & \graybox{4em}{\mathstrut\Clock} & & \graybox{4.1em}{\mathstrut\Tracker} & & \multicolumn{7}{c}{\graybox{14em}{\mathstrut\Collector}} & & \multicolumn{3}{c}{\graybox{7em}{\mathstrut\Player}} \\
\underbrace{\phantomas{\set{\Collector,}}{\raisebox{1em}{$\Set{\substack{\Collector,\\ \Player,\\ \Clock,\\ \Tracker}}$}}}_{\Role{}} & \times &
\underbrace{\intcc{-1;9}}_{\Phase{}} & \times & %
\underbrace{[\Theta(\log n)]}_{\Count{}} & \times &
\underbrace{[k]}_{\Tc{}} & \times &
\underbrace{[k]}_{\mathclap{\Opinion{}}} & \times &
\underbrace{[10]}_{\mathclap{\Tokens{}}} & \times &
\underbrace{\set{0, 1}^3}_{\mathclap{\substack{\Defender{},\\\Challenger{},\\\Winner{}}}} & \times &
\underbrace{[-10;10]}_{\ell} & \times &
\underbrace{\set{A,B,U}}_{\mathclap{\Playeropinion{}}} &  \times &
\cS_{\Maj}\end{array}
\]}
\end{minipage}}\vspace{-2ex}
\caption{State Space $\cS$. Note that $[i] = \set{1, \dots, i}$ and $[-i;j] = \set{-i, \dots, j}$.}\label{fig:state-space}
\end{figure*}
Note that $\cS$ is \emph{not} the actual state space used by our protocol.
Our actual state space is much smaller, since the role-specific variables must only be kept track of by the corresponding roles.
We describe this more thoroughly in the corresponding proof of the state complexity in \cref{sec:proof-sketch}.

\paragraph{Simplifications for the Pseudocode}
In our formal algorithms we define how both involved agents $(u,v)$ update their states in an interaction: $u$ is the initiator and $v$ is the responder of that interaction.
To simplify the exposition of our protocols, we allow the use of a "do once" statement in the pseudocode for state transitions that are to be executed \emph{only once} in a given phase.
For example, consider the scenario where the challenger wins the match.
In the subsequent conclusion phase, all defender agents remove and all challenger agents set the $\Defender{}$ bit.
This must be done exactly once, since otherwise all bits are lost.
See \cref{ln:doonceexample:start} to \cref{ln:doonceexample:end} in \cref{alg:setup} for the corresponding pseudocode using a \enquote{do once} statement.
Similarly to the \enquote{do once} statements we assume that agents can determine whether they interact for the \enquote{first time} in a phase.
Note that these statements can be implemented using constantly many bits, such that the overall state space size increases only by a constant factor.

\paragraph{Outline}
In \cref{sec:clock-synchronization} we first describe the protocol for the $\Clock{}$- and $\Tracker{}$-agents. 
In \cref{sec:init} we describe the initialization routine in more detail and present an analysis for the initialization phase.
In \cref{sec:tournament} we present a formal definition of the protocol used by the \Collector-agents and \Player-agents.
A sketch of the proof of the first statement of \cref{thm:result-order} can be found in \cref{sec:proof-sketch} and the complete proof can be found in \cref{sec:proof-theorem1}.

\subsection{Clock and Tracker Agents} \label{sec:clock-synchronization}

The $\Clock$-agents have two different tasks (see \cref{alg:synchronization}).
First they decide when the initialization phase is over.
For that they use their local counter $\Count{}$ (initialized to zero).
Whenever they interact with a non-\Collector-agent they increase $\Count{}$ by one.
If they interact with a \Collector-agent $\Count {}$ is decreased by one as long as it is larger that zero.
As soon as $\Count {}$ reaches $5\log n$ the agent decides that the initialization phase is over (constant fraction of non-$\Collector$--agents is reached) and sets $\Phase{}= 0$ which is then spread via broadcast ($\Phase{}$ is initialized at the beginning of the whole protocol to $-1$).
From there on the $\Clock$-agents use $\Count {}$ to run the leaderless phase clock from \cite{DBLP:conf/soda/AlistarhAG18} for the synchronization which works as follows.
The counter $\Count {}$ is used modulo $\Psi = \Theta(\log n)$.
Whenever two $\Clock$-agents interact, the one with the lower counter value (w.r.t.\ the circular order modulo $\Psi$) increments its $\Count {}$.
If both $\Clock$-agents have the same $\Count {}$ value ties are broken arbitrarily.
When $\Count {}=0$ the variable $\Phase{}$ is increased by one (modulo 10).
Note that, alternatively to this simple clock, any phase clock that requires $\ldauOmicron{\log n}$ states can be used.

\begin{lstalgo}{Clock Synchronization. We assume that $u$ is a $\Clock$ agent.\label{alg:synchronization}}
if $\Phase u = -1$ then$\label{alg:synchronization:1}$
    $\displaystyle \Count u \gets \begin{cases} \Count u + 1& \text{if } \Role v \neq \Collector   \\     \Count u - 1 & \text{if } \Role v = \Collector \text{ and  }\Count u > 0 \end{cases} $
    if $\Count u = 5\cdot \log n$ then
        $\Phase u \gets 0\label{alg:synchronization:4}$

if $\Phase u \neq -1$ and $\Phase v \neq -1$ then
    leaderless_phase_clock ($\Count u, \Count v$) /* execute the leaderless phase clock from \cite{DBLP:conf/soda/AlistarhAG18} */
    if $\Count {u}$ passes through zero then
        $\Phase u \gets \Phase u + 1 \mod 10$
\end{lstalgo}

The $\Tracker$-agents determine which opinion has to take over the role as a challenger (see \cref{alg:tracker}).
The state variable $\Tc{}$ is initialized (see initialization phase) with $1$ and incremented by one (modulo $k$) whenever $\Phase{}$ switches over to zero. Note that during the first tournament $\Tc{} = 2$. This holds due to the initialization of $\Tc{}$ with one and the fact that it is incremented as soon as $\Phase{}$ is incremented from $-1$ to 0 when the initialization phase ends.

\enlargethispage{2\baselineskip}

\begin{lstalgo}{We assume that $u$ is a $\Tracker$-agent.\label{alg:tracker}}
if $\Phase u = 0$ and $u$ interacts fo${}$r the first time in this phase then
    $\Tc u \gets \Tc u + 1$.
\end{lstalgo}

\subsection{Initialization} \label{sec:init}

\label{sec:preprocessing-details}

The objective of this phase is to partition the population into the four different roles $\Collector$, $\Player$, $\Tracker$ and $\Clock$.
Initially every agent has the $\Collector$ role storing one token of its initial opinion.
Whenever two $\Collector$-agents with the same opinion and at most $10$ tokens in total interact, the responder sets its $\Tokens{}$ variable to the sum of the tokens of both agents, and the initiator switches to a roles in $\set{\Clock, \Tracker, \Player}$ uniformly at random.
Agents with opinion $1$ set $\Defender{} = 1$ during their first interactions.
As soon as agent $u$ becomes $\Clock$-agent it uses the state variable $\Count {}$ to determine when the initialization is over by setting $\Phase u$ equals to $0$ which is then  spread via broadcast. 
At this point, the first tournament starts with the $\Setup$ phase.

\begin{lstalgo}{Initialization Phase. We assume that $u$ and $v$ are initially in  $\mathrm{phase}[u] = \mathrm{phase}[v] = -1$. \label{alg:init}}
if $u$ is the initiator fo${}$r the first time and $\Opinion u = 1$ then
    $\Defender u = \ttrue$

if $\Role u = \Role v = \Collector$ and $\Opinion u = \Opinion v$ 
and $\Tokens u + \Tokens v \leq 10$ then
    $(\Tokens u, \Tokens v) \gets (0, \Tokens u + \Tokens v)\label{alg:init:4}$ 
    with probability $1/3$: $\displaystyle \begin{cases} (\Role u, \Count u) & \gets (\Clock, 0) \\ (\Role u, \Tc u) & \gets (\Tracker, 1) \\ (\Role u, \Playeropinion u) & \gets (\Player, U) \end{cases}$

if $\Phase v = 0$ then
    $\Phase u \gets 0$
\end{lstalgo}

\begin{lemma}\label{lem:preprocessing}
  Let $\hat{t}$ denote the interaction, in which the first agent sets $\Phase{}=0$.
  Then, the following statements hold \whp:
  \begin{enumerate}[nosep]
      \item $\hat{t} = \ldauOmicron{n\cdot (k + \log n)}$.
      \item After interaction $\hat{t}$ each of the roles $\Collector,  \Clock, \Tracker$, and $\Player$ are held by at least $n/10$ agents.
      \item After interaction $\hat{t}$ all $\Collector$-agents of opinion $1$ have their defender bit set.
  \end{enumerate}
\end{lemma}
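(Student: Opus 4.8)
The plan is to analyze the initialization phase as two interleaved processes: the token-collection process (driven by \Collector--\Collector interactions with matching opinions) and the clock process that measures when a constant fraction of agents has left the \Collector role. First I would set up notation: let $N_t$ denote the number of non-\Collector agents after interaction $t$, and observe that $N_t$ is non-decreasing. The key coupling is that the $\Clock$-agents' counter $\Count{}$ performs (essentially) a biased random walk that increments on interactions with non-\Collector partners and decrements otherwise, so $\Count{}$ reaching $5\log n$ is a proxy for ``the fraction of non-\Collector agents has been around $1/2$ for long enough.'' The first task is therefore to show that, \whp, the collection process drives the number of non-\Collector agents to a constant fraction of $n$ within $\ldauOmicron{n(k+\log n)}$ interactions, and the second task is to show that the clock correctly detects this (neither firing too early, when few agents are non-\Collector, nor too late).

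For the collection bound I would argue opinion by opinion. Fix an opinion $i$ with initial support $x_i$. The $\Collector$-agents of opinion $i$ perform a coalescence-type process: each matching interaction reduces the number of $\Collector$-agents of opinion $i$ (unless the token cap of $10$ blocks the merge, which only helps, since it means tokens are already concentrated). Standard coalescence arguments (as in analyses of the ``fast'' regime, cf. the birthday-paradox / pairing bounds) give that the number of $\Collector$-agents of opinion $i$ drops from $x_i$ to $\ldauOmicron{x_i/10 + 1}$ within $\ldauOmicron{n}$ interactions \whp when $x_i$ is large, and within $\ldauOmicron{n\log n}$ interactions for the ``last few'' merges when $x_i$ is small. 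Summing the per-opinion ``last few merges'' cost over all $k$ opinions gives the $\ldauOmicron{n(k+\log n)}$ bound; more carefully, I would bound the total number of $\Collector$-agents remaining after $\ldauOmicron{n(k+\log n)}$ interactions and show it is at most, say, $n/2$, so that $N_t \ge n/2$ from that point on. A union bound over the $k$ opinions (with $k \le n/40$) keeps the failure probability at $n^{-\LDAUOmega{1}}$.

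Given that $N_t \ge n/2$ holds after $\ldauOmicron{n(k+\log n)}$ interactions, I would then analyze the clock counter of a fixed $\Clock$-agent $u$ as a random walk with a positive drift (probability roughly $\ge 1/2$ of incrementing, $\le 1/2$ of decrementing) and use a standard hitting-time / Azuma-type concentration argument to show that $\Count u$ reaches $5\log n$ within a further $\ldauOmicron{n\log n}$ interactions \whp, giving statement~1. For the ``not too early'' direction — which is what underpins statement~2 — I would show that while $N_t < c n$ for a suitably small constant $c$, the counter has negative drift, so by the same concentration it cannot reach $5\log n$ except with probability $n^{-\LDAUOmega{1}}$; hence when the first agent sets $\Phase{}=0$ we have $N_{\hat t} \ge cn$, and a symmetric argument on the \emph{complement} (the $\Collector$-agents are never driven below a constant fraction, since each merge removes only one agent and tokens are capped at $10$, so at least $n/10$ tokens-worth of agents — in fact at least $\lceil x_i/10\rceil \ge 1$ \Collector per opinion and a constant fraction overall — remain \Collector) gives that all four roles have $\ge n/10$ agents after interaction $\hat t$. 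Statement~3 is the easiest: opinion-$1$ \Collector-agents set \Defender{} on their first interaction as initiator; by interaction $\hat t = \Omega(n\log n)$ every agent has been an initiator at least once \whp (coupon-collector), and opinion-$1$ agents that merged into a responder keep the \Defender{} bit, so all surviving opinion-$1$ \Collector-agents carry it.

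The main obstacle I expect is making the clock's ``not too early'' argument fully rigorous while the collection process is still running: the drift of a $\Clock$-agent's counter depends on the \emph{current} fraction $N_t/n$, which itself evolves randomly, so I cannot treat the walk as having a fixed bias. The clean way around this is to couple $\Count u$ with a pessimistic walk that has increment probability exactly $c$ as long as $N_t < cn$ (valid because $N_t < cn$ implies the true increment probability is at most $\approx c$, using that a $\Clock$-agent increments only when its partner is non-\Collector or has a strictly larger counter), show that walk needs $\Omega(n\log n)$ interactions and in particular $\omega$ of the interactions consumed before $N_t$ first reaches $cn$, and conclude $\Count u < 5\log n$ throughout that initial window. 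Getting the constants to line up (the threshold $5\log n$, the fraction $c$, the eventual $n/10$ guarantee, and the $1/3$ split probability that feeds the three non-\Collector roles each getting $\ge n/10$) is the part requiring care; everything else is routine concentration.
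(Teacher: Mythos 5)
Your overall architecture (random-walk analysis of the clock counter with state-dependent drift, the ``not too early'' coupling, the token-cap argument showing at least $n/10$ agents stay \Collector, the uniform $1/3$ split for the other three roles, and the coupon-collector argument for Statement~3) matches the paper's proof closely, and those parts are sound. The paper formalizes your ``pessimistic walk'' idea by analyzing a modified process in which no agent ever sets $\Phase{}=0$ and coupling it to the real one, which is essentially the device you sketch.

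The genuine gap is in your bound on the collection phase. Your per-opinion coalescence claim --- that the \Collector-agents of opinion $i$ drop to $\ldauOmicron{x_i/10+1}$ within $\ldauOmicron{n}$ interactions when $x_i$ is large and within $\ldauOmicron{n\log n}$ for the last few merges when $x_i$ is small --- is false for intermediate sizes: when $z$ collectors of opinion $i$ remain, a merging interaction occurs with probability about $z^2/n^2$, so reducing opinion $i$ from $x_i$ to $x_i/10$ costs on the order of $n^2/x_i$ interactions, which for $x_i=n^{0.6}$ is $n^{1.4}$, not $\ldauOmicron{n\log n}$. Your plan as stated therefore cannot produce the $k$-dependence in $\ldauOmicron{n(k+\log n)}$; the $k$ enters precisely because collectors may be spread over $k$ opinions. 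The paper avoids per-opinion bookkeeping entirely: it shows that as long as more than $n/3$ collectors remain, at least $n/6$ of them hold at most $5$ tokens (since the $n$ tokens admit at most $n/6$ agents with $\ge 6$ tokens), and then Cauchy--Schwarz over the opinion counts gives that each interaction is ``good'' (removes a collector) with probability at least $(n/6)^2/(kn^2)-1/n \ge 1/(500k)$; a Chernoff bound over $500nk$ interactions finishes the argument. Note also that your aside that the token cap ``only helps'' is not justified --- the cap blocks merges --- and the at-most-$n/6$-agents-with-$\ge 6$-tokens counting step is exactly what is needed to neutralize it. To repair your route you would have to restrict attention to opinions with $x_i \gtrsim n/k$ (the others contribute few collectors in total) and prove the $\ldauOmicron{n^2/x_i}=\ldauOmicron{nk}$ coalescence bound for those, at which point you have essentially rederived the aggregate argument.
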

\begin{proof}
We consider a modified process, which mimics the original process. The only difference is that in this process, we prevent $\Clock$-agents from setting their $\Phase{}$ to $0$ by removing \cref{alg:synchronization:4} of \cref{alg:synchronization}.
This causes all agents to remain in the $\Init$ phase, i.e., they have $\Phase{}$ set to $-1$ indefinitely. In this setting all agents keep performing according to \cref{alg:init}. This simplifies the analysis as we do not have to deal with some agents that already started the tournament.
In the following we assume that this modified process runs alongside the original process and that the same random choices are made in both processes. 
Let now $\tau_m(x)$ denote the first interaction in which at most $x \cdot n$ $\Collector$-agents remain in the modified process. Similar, let $\hat{t}_m$ denote the first interaction in which some $\Clock$-agent counts to $5 \log n$. 
Additionally, we define the same notation with subscript $o$ with respect to the original process. 
Observe that $\hat{t}_m = \hat{t}_o$ as until this interaction occurs, both processes are identical.

We start by establishing that, in the modified process, $\tau_m(1/3)$ is reached quickly. That is, the number of remaining $\Collector$-agents decreases fast as long as all nodes follow \cref{alg:init}.

\begin{claim}
\label{claim:lemma1-2}
     It holds that $\tau_m(1/3) = \ldauOmicron{n \cdot k}$ \whp.
\end{claim}
\begin{proof}
In order to reach interaction $\tau_m(1/3)$ exactly $\lceil 2n/3 \rceil$ agents need to leave their $\Collector$ role due to the token transfer in \cref{alg:init:4} of \cref{alg:init}. 
In the following we say that an interaction is \emph{good} if two agents interact that both are $\Collector$-agents, have the same opinion, and have at most $10$ tokens in total. 
Such an interaction decreases the number of $\Collector$-agents by one. 
Let $z_i(t)$ denote the number  $\Collector$-agents of opinion $i$ which have at most $5$ tokens before interaction $t$ is executed. 
If two such agents of the same opinion interact, then the interaction is guaranteed to be good. Fix now some interaction $t < \tau(1/3)$, i.e., an interaction before which more than $n/3$ $\Collector$-agents are still present. Then, the probability for interaction $t$ to be \emph{good} is 
\begin{align*}
    \sum_{i=1}^{k}\frac{z_i(t)}{n} \cdot \frac{z_i(t) - 1}{n - 1} \geq  \frac{1}{n^2} \sum_{i = 1}^{k} z_i(t)^2 - \frac{1}{n^2}\sum_{i=1}^{k} z_i(t) \\
    \overset{(a)}{\geq} \frac{1}{n^2} \frac{\left(\sum_{i=1}^{k} z_i(t)\right)^2}{k} - \frac{1}{n}
    \overset{(b)}{\geq} \frac{1}{n^2} \frac{n^2}{36 \cdot k}  - \frac{1}{n} \overset{(c)}{\geq} \frac{1}{500k}
\end{align*}
For the third inequality (b) we apply the following counting argument to bound $\sum_{i=1}^{k} z_i(t)$: only $n/6$ agents may have at least $6$ tokens as the number of tokens sums to $n$ at all times. 
Observe that we assume that at time $t$ there are still $n/3$ total $\Collector$-agents remaining. 
Hence, $\sum_{i=1}^{k} z_i(t) \geq n/3 - n/6 = n/6$. 
For the last inequality (c), we use that $k\leq n/40$ as assumed in \cref{thm:result-order}.
As each interaction is good with probability at least $1/500k$, independently, we consider a sequence of $500 n k$ interactions and apply Chernoff bounds. 
This yields that, \whp, there will be at least $\lceil 2n/3 \rceil$ good interaction in this sequence, reducing the number of $\Collector$-agents below $n/3$. In other words: $\tau(1/3) < 500 n k$ \whp.
\end{proof}

In the following claim we bound the time for the first $\Clock$-agent to count until $5\log n$ in the modified process.

\begin{claim} 
\label{claim:lemma1}
It holds that $\tau_m(2/3) < \hat{t}_m$ and $\hat{t}_m = \ldauOmicron{n \cdot (k + \log n)}$ \whp
\end{claim}
\begin{proof}
We consider the modified process and couple the counting procedures of any fixed $\Clock$-agent with a biased random walk on the non-negative line. 
The current value of the counter variable corresponds to the position of the walk on the line. 
Each time the $\Clock$-agent interacts as initiator with a non-$\Collector$-agent, the random walk process moves to the right.
Similarly, when interacting with a $\Collector$-agent the random walk moves to the left (or remains at 0 if its current position is 0). 
We are interested in the interactions required for the random walk to hit then value $5 \log n$ as this corresponds to the $\Clock$-agent counting until $5\log n$.
Until $\tau_m(2/3)$ is reached, this hitting time may be minorized with the hitting time of a random walk that has probability exactly $q=2/3$ to move to the left and probability $p=1/3$ to move to the right.
Due to the strong drift towards~0, it is known that such a random walk takes $\poly(n)$ steps \whp to hit $5\log n$.
To determine the constant hidden in $\poly(n)$ we utilize a variant of a known random walk result.
It implies that this hitting time is at least $n^{2.5}$ with probability at least $1-n^{-2.5}$. More details are given in \cref{lem:random-walk} in \cref{apx:auxiliary-results}.
Therefore, \whp, the $\Clock$-agent will \emph{not} reach a counter value of $5\log n$ before, either, $\tau_m(2/3)$ is reached or $n^{2.5}$ interactions have passed. Now, observe that $\tau_m(2/3) < \tau_m(1/3)$ as the number of $\Collector$-agents can only decrease over time. This implies by \cref{claim:lemma1-2} that $\tau_m(2/3) < n^{2.5}$ \whp for large enough $n$. Hence, $\hat{t}_m(2/3)$ precedes $n^{2.5}$ \whp and $\hat{t}_m > \tau_m(2/3)$ follows.

To show the upper bound on $\hat{t}_m$, we first argue that soon after $\tau_m(1/3)$ some $\Clock$-agents increases its counter to $5 \log n$. We follow a similar approach and fix the modified process at some interaction $t \geq \tau_m(1/3)$ together with a $\Clock$-agent and its corresponding random walk. 
This time, we majorize the time for the counter to reach $5\log n$ with the hitting time of a random walk with $p=2/3$ and $q=1/3$. 
Such a random walks is known (e.g Theorem 18.2 of \cite{book/mixingtimes}) to have a hitting time of $\ldauOmicron{\log n}$ \whp. For convenience we included a similar statement in \cref{lem:random-walk} in \cref{apx:auxiliary-results}.
Each movement of the random walk corresponds to one interaction as initiator of the $\Clock$-agent.
As the agent is selected as an initiator with probability $1/n$ in each interaction, it follows from a Chernoff bound that $\ldauOmicron{n \log n}$ interactions guarantee sufficient movements of the random walk \whp.
Therefore, some $\Clock$-agents hits $5 \log n$ before time $\tau_m(1/3) + \ldauOmicron{n\log n}$ \whp. From \cref{claim:lemma1-2} we know that $\tau_m(1/3)  = \ldauOmicron{n \cdot k}$ \whp, allowing us to simplify this upper bound to $\ldauOmicron{n \cdot (k + \log n)}$.
\end{proof}

In order to show the first two statements of the lemma we need the guarantees of \cref{claim:lemma1} in terms of the original process. Initially we established that $\hat{t}_o = \hat{t}_m$ and that until this interaction both processes act identically per definition.
Additionally, note that $\Pr[\tau_o (2/3) = \tau_m(2/3)] \geq \Pr[\tau_m(2/3) \leq \hat{t}_m]$. This inequality holds because, if the event $\tau_m(2/3) \leq \hat{t}_m$ occurs, then both processes acted identically until interaction $\tau_m(2/3)$. Therefore, the amount of $\Collector$-agents is the same in both processes until this interaction, implying that $\tau_m(2/3) = \tau_o(2/3)$. By \cref{claim:lemma1} we have that $\tau_m(2/3) \leq \hat{t}_m$ \whp and therefore $\tau_m(2/3) = \tau_o(2/3)$ is also a high probability event. 
Hence, \whp, \cref{claim:lemma1} also holds when exchanging $\hat{t}_m$ by $\hat{t}_o$ and $\tau_m(2/3)$ by $\tau_o(2/3)$, leading to the statement: $\tau_o(2/3) < \hat{t}_o = \ldauOmicron{n \cdot (k + \log n)}$ \whp.
This inequality immediately yields the first statement of the lemma.
We also use this inequality to show the second statement of the lemma as it implies that at $\hat{t}_o$ at most $2n/3$ $\Collector$-agents remain \whp.
Therefore, at time $\hat{t}_o$, there must be at least $n/3$ $\Collector$-agents that have left their role. Every agent which switches its role selects a new role uniformly and independently at random.
Hence, it follows from Chernoff bounds that each non-$\Collector$ role consists of at least $(n/3) \cdot ( 1 /3) ( 1- o(1)) > n/10$ agents. 
Additionally, note that there must be at least $n/10$ $\Collector$-agents at all times. This follows since there are $n$ tokens in total, and only $\Collector$-agents can hold up to 10 tokens each.

The proof for the final statement of the lemma is straightforward. It suffices to show that every agent interacts at least once before the first $\Clock$-agent sets $\Phase{}$ to $0$. Even if a $\Clock$-agent interacts with a non-$\Collector$-agent each time it is selected as initiator, it takes at least $5 \log n$ such interactions for it to set $\Phase{}$ to $0$. From Chernoff bounds it follows \whp that it requires more than $2  n \log n$ overall interactions for any $\Clock$-agent to be selected as initiator sufficiently many times. However, any fixed agent manages to act as initiator at least once within $2 n \log n$ interactions \whp. As each node is selected with probability $1/n$ as an initiator, the probability that an arbitrary but fixed agent is not selected is at most $(1-1/n)^{2 n \log n} \leq \exp(-2 \log n) \leq n^{-2}$. A union bound over all agents shows that this is enough time for every agent to act as initiator at least once \whp. 
\end{proof}

\begin{lstalgo}[t]{Tournament Algorithm\label{alg:setup}}
if $\Phase u = \Phase v = 0$ then /*Setup Phase*/

    if $\Role u = \Collector$ and $\Role v = \Tracker$ and $\Opinion u = \Tc v$ then 
        $\Challenger u \gets \ttrue$
    
    if $\Role u = \Collector$ then 
        $\displaystyle \ell[u] \gets \begin{cases} \Tokens u & \text{ if } \Defender u \\ -\Tokens u & \text{ if }  \Challenger u \\ 0 & \text{ otherwise. } \end{cases}$

if $\Phase u = \Phase v = 2$ then /* Cancellation Phase */
    
    if $\Role u = \Role v = \Collector$ then
        $\displaystyle (\ell[u],\ell[v]) \gets \left(\floor*{\frac{\ell[u]+\ell[v] }{2}},\ceil*{\frac{\ell[u] +\ell[v]}{2}}\right)$

if $\Phase u = \Phase v = 4$ then /* Lineup Phase */

    if $\Role u = \Collector$ and $\Role v = \Player$ and $\Playeropinion v = U$ then
        $\displaystyle \Playeropinion v \gets \begin{cases} A & \text{ if } \ell[u] > 0 \\ U & \text{ if } \ell[u] = 0 \\ B & \text{ if } \ell[u] < 0 . \end{cases}$
        $\ell[u] \gets \sign(\ell[u]) \cdot (\abs{\ell[u]} - 1)$

if $\Phase u = \Phase v = 6$ then /* Match Phase */

    if $\Role u = \Role v = \Player$ then 
        execute $\Majority(\cS_\Maj)$ /* execute the exact majority protocol from \cite{DBLP:conf/focs/DotyEGSUS21} */

if $\Phase u = \Phase v = 8$ then  /* Conclusion Phase */

    if $\Role u = \Collector$ and $\Role v = \Player$ and $\Playeropinion v = B$ do once$\label{ln:doonceexample:start}$
        $\Defender u \gets \Challenger u$
        $\Challenger u \gets \tfalse$$\label{ln:doonceexample:end}$
        
    if $\Role u = \Collector$ and $\Role v = \Player$ and $\Playeropinion v \in  \{A, U \}$ do once
        $\Challenger u \gets \tfalse$        

if $\Phase v >_{(10)} \Phase u$ then $\label{alg:synchronize1}$
    $\Phase u \gets \Phase v$ $\label{alg:synchronize2}$
\end{lstalgo}

\subsection{Player and Collector Agents}\label{sec:tournament}

The tournaments are performed by both \Player- and \Collector-agents. 
Each tournament is divided into the five phases \Setup, \Cancellation, \Lineup, \Match, and \Conclusion. To synchronize the beginning of the phases we assume that there are phases (numbered with odd numbers) in which non of the  \Player- and \Collector-agents is activated. 

In \Setup $\Collector$-agents determine if their opinion is the challenger (in the $i$-th tournament Opinion $i+1$ is the challenger opinion and $\Tc{}=i$). 
Furthermore, all challenger and defender agents initialize a variable $\ell [u]$ with the (positive or negative) amount of tokens they store.
In \Cancellation the agents use the load balancing protocol from \cite{DBLP:conf/ipps/BerenbrinkFKK19, DBLP:journals/jap/MocquardRSA21}.
At the end of the protocol each agent $u$ will have $\ell[u]\in \set{\overline{\bm{\ell}} -1, \overline{\bm{\ell}}, \overline{\bm{\ell}}+1 }$ where $\overline{\bm{\ell}}$ is the average of all the $\ell [u]$ values from challengers and defender agents rounded to the nearest integer.
This phase is used to reduce the number of tokens such that each token can be assigned to a different \Player-agent.
This will be done in the \Lineup phase. 
The load balancing protocol can be used (see \cite{DBLP:conf/focs/DotyEGSUS21}) to calculate the majority opinion for the case of $k=2$ and large bias.
In that case the majority opinion is the opinion for which a $\Collector{}$-agent exists with $\ell[u]\leq-2$ or $\ell[u]\geq 2$. For the ease of presentation of our protocol we do not distinguish between the case that the majority is already determined after this phase or not. 

In the match phase the $\Player$-agents, now having opinions $A$ (defender opinion), $B$ (challenger opinion) or $U$ (undecided, held by $\Player$-agents which do not receive any opinion) determine the majority opinion using the majority protocol of \cite{DBLP:conf/focs/DotyEGSUS21}. We assume that the protocol returns the result in the state $\Playeropinion{}$ which takes the values of the majority opinion. Note that the protocol from \cite{DBLP:conf/focs/DotyEGSUS21} assumes that each agent has one of the two opinions. In \cref{sec:proof-theorem1} describe in more detail how this protocol can be applied in our setting.
In the \Conclusion phase $\Collector$-agents holding the majority opinion set their defender bit. They have to participate in the next tournament. In \cref{alg:synchronize1,alg:synchronize2} agents broadcast $\Phase{}$ to remain synchronized. 

\subsection{Aftermath}%
\label{sec:proof-sketch}

This subsection provides a short description of how our protocol finishes after the last tournament.

\paragraph{Final Broadcast}
After the final tournament the agents still need to ensure that the ultimate defender -- \whp the initial plurality opinion -- is disseminated to all agents.

The $\Tracker$-agents initiate this final broadcast.
Recall that the $\Tracker$-agents have a variable \Tc{} that keeps track of the challenger in each tournament.
Once this variable reaches $k+1$, all opinions have participated in a tournament, and those $\Collector$-agents that have the defender bit set have \whp the initial plurality opinion.
Now when a $\Tracker$-agent $u$ with $\Tc u = k+1$ interacts with a $\Collector$-agent $v$ with $\Defender{v} = \ttrue$, the defender agent sets its \emph{winner} bit $\Winner v \gets \ttrue$.
This winner bit and the corresponding opinion is disseminated to all agents: any agent $w$ for which $\Winner{w} = \tfalse$ sets $(\Role w, \Opinion w, \Winner w)$ to $(\Collector, \Opinion{v}, \ttrue)$ when it interacts with such a winner agent $v$ (with $\Winner v = \ttrue$).

\paragraph{Proof of \cref{thm:result-order}}
Next we  provide a brief proof sketch for the runtime from the first statement in \cref{thm:result-order} (see \cref{sec:proof-theorem1} for the full proof).
Afterward, we prove the bound on the size of the state space from the first statement in \cref{thm:result-order}.
\begin{proof}[Proof Sketch: Runtime for Statement~(1) of \cref{thm:result-order}]
The proof is done inductively using an invariant
(see \cref{lem:setup} in \cref{sec:proof-theorem1}).
The invariant states that the \Collector{} and \Defender{} bits are set correctly and that the number of $\Player$-agents is sufficiently large for the number of tokens of the (defender and challenger) \Collector-agents.
The rest follows from \cite{DBLP:conf/ipps/BerenbrinkFKK19, DBLP:journals/jap/MocquardRSA21} and \cite{DBLP:conf/focs/DotyEGSUS21}.
\end{proof}

\begin{proof}[Proof: Space Complexity for Statement~(1) of \cref{thm:result-order}]
\Vardef{Shared}
\Cref{fig:state-space} shows a \emph{superset} $\cS$ of our protocol's state space.
Depending on their role, the agents only use a much smaller portion of $\cS$ as described below.

Each agent's state space consists of a set of \emph{shared} variables, which any agent keeps track of, and of \emph{role-specific} variables, which only agents of that role keeps track of.
We use $\cS_{\Shared{}}$ to denote the state set represented by all shared variables and $\cS_{r}$ to denote the variables required for role $r \in \set{ \Clock, \Tracker, \Collector, \Player }$.

Note that $\abs{\cS_{\Shared{}}} = \ldauTheta{1}$.
Indeed, the shared variables encompass the constant size \Role{} variable, the constant size \Phase{} variable, and the constantly many bits required for the do-once statements (see overview of the state space at the beginning of \cref{sec:algorithm}).
The role-specific variables are indicated by the gray boxed in \cref{fig:state-space}.
Specifically:
\begin{itemize}
\item \Clock-agents use \Count{} variable ($\ldauTheta{\log n}$ values).
\item \Tracker-agents use the $\Tc{}$ variable ($k$ values).
\item \Collector-agents use
    the \Opinion{} variable ($k$ values),
    the \Tokens{} variable ($10$ values),
    the \Defender{}, \Challenger{}, \Winner{} bits,
    and the load balancing values $\ell$ ($21$ values).
\item \Player-agents use
    the $\Playeropinion{}$ variable ($3$ values) and
    $O(\log n)$ states for the majority protocol from \cite{DBLP:conf/focs/DotyEGSUS21}.
\end{itemize}
The maximum number of states required by any agent then calculates as

\begin{equation*}
\begin{aligned}
&
\abs{\cS_{\Shared{}}} \cdot \max\set{
    \phantomas[r]{\ldauTheta{\log n}}{\cS_{\Clock}},\;
    \cS_{\Tracker},\;
    \phantomas[r]{k \cdot 10 \cdot 2^2 \cdot 21}{\cS_{\Collector}},\;
    \phantomas[r]{3 \cdot \ldauOmicron{\log n}}{\cS_{\Player}}
}
\\{}={}&
\phantomas[r]{\abs{\cS_{\Shared{}}}}{\ldauTheta{1}}\cdot \max\set{
    \ldauTheta{\log n},\;
    \phantomas[r]{\cS_{\Tracker}}{k},\;
    k \cdot 10 \cdot 2^3 \cdot 21,\;
    3 \cdot \ldauOmicron{\log n}
}
\\{}={}& 
\ldauTheta{k + \log n}
,
\end{aligned}
\end{equation*}

finishing the proof of the first protocol's state complexity.
\end{proof}

	\section{The Improved Algorithm}%
\label{sec:filtering}

The goal in this section is to remove \emph{insignificant} opinions before they even participate in the tournament.
For the moment let us assume that every agent $u$ has a counter $c[u]$ which is used to count the number of interactions with the same opinion.
As soon as the first counter reaches a fixed value $t \in O(\log n)$ the agent sets $\Phase u = 0$ which triggers the beginning of the tournaments. 
Only agents with a counter of at least $t/2$ will participate in the tournament. 
\emph{Insignificant} opinions (those of support $x_i < x_{\max} / c_s$ for some constant $c_s > 1$) are effectively out of the race.
This reduces the amount of required tournaments to $\ldauOmicron{n / x_{\max}}$ and therefore improves the runtime.
To show the correctness of this approach it remains to show that \whp every agent of the initial plurality opinion is among these remaining agents, while no agents of insignificant opinions participate in the tournament.
The rest of the analysis follows along the lines of Statement (2) of \cref{thm:result-order}.
Unfortunately, this simple approach requires an additional counter per agent which exceeds the state space bounds of \cref{thm:result-no-order-phase-clock}.

Our main idea to save on states is to use phase clocks instead of the counters, one per opinion.
In the following we call interactions \emph{meaningful} if an agent interacts with another agent of the same opinion.
We split the agents into \emph{subpopulations}; agents with opinion $i$ belong to subpopulation~$i$.
Every subpopulation runs its own phase clock as follows.
Every agent $u$ has all states of the \emph{junta-driven} phase clock (see \cite{DBLP:journals/dc/AngluinAE08a,DBLP:journals/jacm/GasieniecS21, DBLP:journals/dc/BerenbrinkEFKKR21}), which requires only $\ldauOmicron{\log \log n}$ states compared to the $\Theta(\log n)$ used by the simple counter.
The clocks work as follows.
First, in every subpopulation so-called \emph{junta} agents are selected in meaningful interactions.
Then the phase clock runs on a counter, again in meaningful interactions only.
Note that phase clocks of large subpopulations run faster than phase clocks of small ones.
Whenever a phase clock passes through $0$ the agents increment  $\Phase {}$, which is initialized to $-c$ (we assume that the value $c \in \mathbb{N}$ is a sufficiently large constant). 
Once $\Phase{u}$ becomes $0$ for some agent $u$ this value is broadcasted to all agents as before.
All agents $u$ for which $\Phase{u}$ is still stuck at the initial value $\Phase u = -c$ will not participate in any tournament.
Instead, they change their role (with probability $1/3$ each) from $\Collector$ to $\Clock$, $\Tracker$, or $\Player$.
Note that in contrast to before an agent $u$ does not immediately adopt a new role when it sets $\Tokens u = 0$ in an interaction with another $\Collector$-agent (see \cref{alg:modified-init:7} of \cref{alg:modified-init}).
Instead, agent $u$ waits until $\Phase{u} = 0$.
Then, agent $u$ adopts a new role iff.\ it either has no tokens ($\Tokens u = 0$) or its $\Phase {u}= -c$ (the latter implies that the clock of agent $u$ did not pass through zero even once).

It is now easy to see that this results in a faster convergence time.
Indeed, this follows from how \AlgSimple selects the next challenger opinion if there is no order among the opinions (see description in \cref{sec:removing-the-order}):
In a modified setup phase, a leader selects an opinion as challenger randomly from the \Collector agents which have not yet been defeated in a tournament (using a cascade of one-way epidemic processes on the way).
Hence if there are no \Collector agents left for some of the opinions, there will not be a tournament involving that opinion, and thus the total runtime will be reduced accordingly.

As soon as the first agent reaches $\Phase{u} = 0$ all agents proceed with the modified version of \AlgSimple.
We remark that it can happen that only $o(n)$ $\Collector$-agents remain after removing all insignificant opinions.
In this case, the $\Cancellation$ phase will not achieve a balanced state.
However, all tokens will fit into the \Player agents nonetheless, as we will show in Statement (3) of \cref{lem:init-filtering} that there will be a constant fraction of agents for each role in $\set{\Clock, \Tracker, \Player}$.

While the overall approach sounds very easy, the crux lies in the analysis. 
First of all, we have to analyze the \emph{speed} of the clocks running via meaningful interactions only (\cref{lem:clock-subpopulation})
Then we have to show that all agents of the plurality opinion pass through $0$ at least once, meaning they will participate in the tournament (\cref{lem:init-filtering}).
Finally, we have to show that all agents with insignificant opinions will not participate in any tournament, either because they did not finish the \FormJunta protocol (\cref{lem:small-population}) or because their phase clock runs too slow (\cref{lem:init-filtering}).

\paragraph{Junta-Driven Phase Clock}%
\label{paragraph:clock}
We use the phase clock implementation from \cite{DBLP:journals/dc/BerenbrinkEFKKR21} which starts by electing a junta.
We select the junta in exactly the same way but using meaningful interactions only.
Each agent is equipped with a $\Level{}$ variable, which is initially $0$, and a bit which indicates whether the agent is still active.
Agents progress through levels:
They are initially active, and they remain active and increase their level as long as they interact (as initiators) with another agent on the same or on a higher level. 
If they initiate an interaction with another agent on a lower level, they become inactive.
Finally, agents become also inactive if they hit the maximum level $\ell_{\max} = \floor{\log \log n} - 3$.
All agents that reach this maximum level form the junta and start the phase clock protocol.

In the phase clock every agent is equipped with a phase counter $p[u]$ (initially $0$).
Whenever a junta agent $u$ initiates an interaction with an agent $v$ it sets $p[u] = \max\{p[u], p[v] + 1\}$.
If the initiating agent $u$ is not a junta agent, then $u$ sets $p[u] = \max\{p[u], p[v]\}$.
For $i>0$, we say that an agent $u$ \emph{passes through zero} for the $i$-th time if its phase counter $p[u]$ fulfills $\floor{ p[u] / m } \geq i$ for the first time ($m \in \mathbb{N}$ is a fitting large enough constant). Note that in \cite{DBLP:journals/dc/BerenbrinkEFKKR21} the same property is referred to as $u$ reaching hour $i$ for the first time.

In our protocol we set the maximum level to $\ell_{\max} = \floor{ \log \log n }  - 2$. We show in the proof of \cref{lem:clock-subpopulation} that this modified maximum level still allows the election of a junta \whp as long as the subpopulation has size at least $\sqrt{n}$.

We denote by $\cS_c$ the $\Theta(\log \log n)$ states that are required to execute the junta election and phase clock protocols.
We assume that all agents are initially equipped with sufficiently many additional states to run this clock.
As soon as an agent $u$ sets $\Phase{u}$ to $0$ it may reuse these states.
The following lemma states properties of this phase clock.

\begin{lstalgo}[t]{Modified Initialization.  We assume that $\mathrm{phase}[u] < 0$. \label{alg:modified-init}}
if $\Opinion u = \Opinion v$  and $\Phase v <0$ then
    form_junta_protocol ($\cS_{c}$)/* execute the junta-election protocol from \cite{DBLP:journals/dc/BerenbrinkEFKKR21} */ 
    loglog_phase_clock ($\cS_{c}$) /* execute the phase clock protocol from \cite{DBLP:journals/dc/BerenbrinkEFKKR21} */

    if phase clock of $u$ passes through zero then
        $\Phase u \gets \Phase u + 1$

    if $\Tokens u + \Tokens v \leq 10$ then
         $(\Tokens u, \Tokens v) \gets (0, \Tokens u + \Tokens v)\label{alg:modified-init:7}$

if $\Phase u = 0$ or $\Phase v = 0$ then
    if $\Phase u = -c$ or $\Tokens u = 0\label{alg:modified-init:9}$
        with probability $1/3$: $\displaystyle \begin{cases} (\Role u, \Count u) & \gets (\Clock, 0) \\ (\Role u, \Tc u) & \gets (\Tracker, 1) \\ (\Role u, \Playeropinion u) & \gets (\Player, U) \end{cases}\label{alg:modified-init:10}$
    $\Phase u \gets 0$
\end{lstalgo}

\begin{lemma}
\label{lem:clock}
Assume that we run the junta-election process and phase clock from \cite{DBLP:journals/dc/BerenbrinkEFKKR21} on a population of $n$ agents.
Let $s(0)$ ($e(0)$, resp.) be the interaction when the first (last, resp.) junta agent is elected and
let $s(i)$ ($e(i)$, resp.) be the interaction when the first (last, resp.) agent passes through zero for the $i$-th time.
Then, for any constant $a>0$, there exist two properly chosen constants $c_1'$ and $c_2'$, such that we have with probability at least $1-n^{-a}$, 
\begin{enumerate}[nosep]
\item The protocol elects a non-empty junta of size at most $n^{0.98}$.
\item $s(0) \leq c_2' \cdot n \log (n)$.
\item $c_1' \cdot n \log n \leq s(i+1)-s(i) \leq c_2' \cdot n \log n $ \quad for any $i = O(\poly(n))$,
\item $s(i+1) > e(i)$ \quad for any $i = O(\poly(n))$.
\end{enumerate}
\end{lemma}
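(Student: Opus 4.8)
The plan is to derive \cref{lem:clock} essentially as a repackaging of the junta-election and junta-driven phase-clock analysis of \cite{DBLP:journals/dc/BerenbrinkEFKKR21}, where only two features differ from that work and therefore need attention: we run with the slightly larger maximum level $\ell_{\max} = \floor{\log\log n} - 2$, and we need the clock guarantees to hold simultaneously for all $O(\poly(n))$ hours rather than for one hour or a constant number of them. I would first establish the junta properties (Statement~(1)), then invoke the clock dynamics of \cite{DBLP:journals/dc/BerenbrinkEFKKR21} for the timing statements, and finally stitch the timing statements together across all hours by a union bound.

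For Statement~(1), I would reuse the level-race argument of \cite{DBLP:journals/dc/BerenbrinkEFKKR21}: the probability that a fixed agent reaches level $\ell$ decreases doubly exponentially in $\ell$, so the expected number of agents reaching $\ell_{\max} = \floor{\log\log n} - 2$ equals $n^{\Theta(1)}$ --- in particular it is both $\omega(1)$ and well below $n^{0.98}$. A concentration bound on this count (a sum of weakly dependent indicators, treated exactly as in the cited analysis) then gives that \whp the junta is non-empty and has size at most $n^{0.98}$. The only change from \cite{DBLP:journals/dc/BerenbrinkEFKKR21} is the value of $\ell_{\max}$, which merely rescales the expected junta size by a polynomial factor; the one place this could go wrong is that too large an $\ell_{\max}$ would make the expectation drop to $O(1)$ and the junta potentially empty, so I would explicitly note that $\floor{\log\log n} - 2$ keeps the expectation $\omega(1)$. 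Everything here holds with probability $1 - n^{-a}$ for any target constant $a$, at the cost of the constants in the concentration bound.

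Statements~(2)--(4) are the standard behavior of the phase clock of \cite{DBLP:journals/dc/BerenbrinkEFKKR21} once a junta in the required range exists, and I would invoke them as black boxes. Statement~(2) is the $O(n\log n)$-interaction bound (\whp) on the time until the first agent reaches $\ell_{\max}$, which fixes a first candidate for $c_2'$. For Statements~(3) and~(4), the drift analysis shows that a junta agent's counter advances by $\Theta(1)$ per $\Theta(n)$ interactions in which a junta agent initiates, so the fastest agent reaches each successive hour after $\Theta(n\log n)$ further interactions (giving both sides of the bound on $s(i+1)-s(i)$ and determining $c_1'$ and an updated $c_2'$), while the synchronization lemma --- with $m$ chosen large enough, the slowest agent follows the fastest within strictly less than one hour --- gives $e(i) < s(i+1)$. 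Each of these holds for a single $i$ with probability $1 - n^{-b}$, where $b$ can be made as large as desired by enlarging the clock constants. The last step, and the only genuinely fiddly bookkeeping, is the quantifier "for any $i = O(\poly(n))$": I would union-bound over the at most $\poly(n)$ hours that can occur within the $\poly(n)$ interactions the full protocol runs, which forces $b$ to exceed $a$ plus the polynomial degree and thereby pins down $c_1'$ and $c_2'$ as functions of $a$. I expect no conceptual obstacle here --- the work is purely in tracking how the target exponent $a$ propagates through the choice of $b$, the clock constants, and the doubly-exponential tails in the junta analysis, and in checking these choices are mutually consistent.
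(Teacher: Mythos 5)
Your proposal is correct and is in substance what the paper intends: the paper's entire proof of this lemma is the single line ``Follows from Theorem 1 and Lemma 6 in \cite{DBLP:journals/dc/BerenbrinkEFKKR21}'', and your sketch is a reasonable unpacking of that citation (junta size and non-emptiness from the level-race analysis, hour spacing and synchronization from the clock drift analysis, union bound over the $\mathrm{poly}(n)$ hours with the failure exponent tuned via the protocol's constants). The one thing to correct is scoping: the modified maximum level $\ell_{\max} = \lfloor\log\log n\rfloor - 2$ does \emph{not} enter this lemma, which concerns the unmodified protocol of \cite{DBLP:journals/dc/BerenbrinkEFKKR21} run on the full population of $n$ agents; the paper isolates the effect of raising $\ell_{\max}$ by one level in \cref{lem:clock-subpopulation} (specifically in the claim inside its proof, where non-emptiness of the junta at the higher level requires the stronger bound $B_{\ell_{\max}-1} > x_j^{2/3}$ plus a serialization argument, rather than just a generic concentration bound). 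Your instinct that the larger $\ell_{\max}$ is the delicate point is right, but that work belongs to the subpopulation lemma, not here.
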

\begin{proof}
Follows from Theorem 1 and Lemma 6 in \cite{DBLP:journals/dc/BerenbrinkEFKKR21}.
\end{proof}

We denote by $s_j(0)$ ($e_j(0)$) be the interaction at which the first (last, respectively) junta agent is elected in subpopulation $j$.
For $i>0$ we denote by $s_j(i)$ ($e_j(i)$) the time when the first (last, respectively) agent of opinion $j$ passes through zero for the $i$-th time.
The following lemma adjusts the results of \cref{lem:clock} to subpopulations. 

\begin{lemma}
\label{lem:clock-subpopulation}
Fix a subpopulation $j$ and assume that  $x_j \geq n^{1/2}$. Consider the phase clock driven by subpopulation $j$. Then, for any constant $a>0$, there exist constants $c_1\le c_2 \in \mathbb{N}$ such that the following statements hold with probability $1-{x_j}^{-a}$.
\begin{enumerate}[nosep]
\item Subpopulation $j$ elects a non-empty junta  with at most $(x_j)^{0.98}$ agents.
\item $s_j(0) \leq c_2 \cdot \frac{n^2}{x_j} \log (n)$.
\item  $ c_1\cdot\frac{n^2}{x_j} \log(n)\le s_j(i+1) -  s_j(i) \le c_2\cdot  \frac{n^2}{x_j} \log(n)$ \quad for any $i = O(\poly(n))$.
\item $s_j(i+1) > e_j(i)$ \quad for any $i = O(\poly(n))$.
\end{enumerate}
\end{lemma}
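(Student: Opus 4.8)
The plan is to \emph{couple} the phase clock of subpopulation~$j$ with a stand-alone run of the junta-election and phase-clock protocols of~\cite{DBLP:journals/dc/BerenbrinkEFKKR21} on a population of exactly $x_j$ agents, and then rescale time from ``meaningful interactions'' to ``global interactions''. During the modified initialization (\cref{alg:modified-init}) no agent changes its opinion before the first agent sets $\Phase{}=0$, so the set $V_j$ of agents of opinion~$j$ is fixed and of size $x_j$, and these agents update their junta/clock state \emph{only} in meaningful interactions (those with $\Opinion u = \Opinion v$); token transfers do not touch the clock. In each global interaction the ordered pair is uniform over distinct pairs, independently of the past; conditioned on it being meaningful for~$j$ --- an event of probability $p_j := \frac{x_j(x_j-1)}{n(n-1)} = \Theta(x_j^2/n^2)$ --- the pair is uniform over ordered pairs inside $V_j$. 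Hence the subsequence of meaningful interactions is distributed exactly like a run on $x_j$ agents, and $s_j(i),e_j(i)$, read off in that subsequence, coincide with the quantities $s(i),e(i)$ of \cref{lem:clock} for population size $x_j$. (As in the proof of \cref{lem:preprocessing} we analyze the variant in which all clocks keep running; the true process agrees with it until the first $\Phase{}=0$.)

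Next I would invoke \cref{lem:clock} on the coupled population of $x_j$ agents, with the error exponent chosen suitably. The one point where \cref{lem:clock} cannot be used as a black box is the junta election: our protocol uses the \emph{globally} fixed threshold $\ell_{\max}=\floor{\log\log n}-2$ rather than $\log\log x_j - \Theta(1)$. Opening up the level-based construction of~\cite{DBLP:journals/dc/BerenbrinkEFKKR21} shows that the number of agents that reach a given level is sharply concentrated, so that on a population of $N$ agents \emph{any} threshold $\ell_{\max}=\log\log N\pm\Theta(1)$ elects, with probability $1-N^{-\Omega(1)}$, a non-empty junta of size at most $N^{0.98}$. Since $\sqrt n \le x_j \le n$ gives $\log\log x_j = \log\log n \pm \Theta(1)$, our $\ell_{\max}$ falls into this admissible window for $N = x_j$; this proves statement~(1) and lets the remaining parts of \cref{lem:clock} apply. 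Moreover $x_j \ge \sqrt n$ implies $O(\poly(n)) \subseteq O(\poly(x_j))$, so the index range $i = O(\poly(n))$ is covered.

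Finally I would pass from meaningful-interaction counts to global-interaction counts via the gap representation: between two consecutive meaningful interactions lies an independent $\mathrm{Geometric}(p_j)$ number of global interactions, and these gaps are independent of the meaningful-interaction contents, hence of the coupled clock trajectory. By \cref{lem:clock} on the coupled population, the first junta is elected within $O(x_j\log x_j)$ meaningful interactions and the $i$-th ``hour'' of the coupled clock spans $\Theta(x_j\log x_j) = \Omega(\sqrt n\log n)$ meaningful interactions. Summing the corresponding fresh block of gaps and applying a Chernoff bound for sums of that many independent geometrics, each block occupies $\Theta(x_j\log x_j / p_j) = \Theta(n^2\log n/x_j)$ global interactions with probability $1-x_j^{-\Omega(1)}$ (using $\log x_j = \Theta(\log n)$); since a different block is used for each hour the error does not accumulate, so a union bound over $i = O(\poly(n))$ is harmless and yields statements~(2) and~(3) with suitable $c_1 \le c_2$. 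Statement~(4) needs no estimate: \cref{lem:clock} gives $s(i+1) > e(i)$ in meaningful-interaction index on the coupled trajectory, and since every gap is at least~$1$ this strict inequality transfers to global time.

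The main obstacle is the second step --- handling the mismatch between the fixed threshold $\ell_{\max}$ (set from $n$) and the actual subpopulation size $x_j$ in the junta election --- because it forces one to reopen the level-analysis of~\cite{DBLP:journals/dc/BerenbrinkEFKKR21} instead of citing \cref{lem:clock} verbatim. Everything else is the routine coupling-plus-Chernoff rescaling of time by the factor $1/p_j = \Theta(n^2/x_j^2)$.
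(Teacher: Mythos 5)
Your overall route is the same as the paper's: restrict attention to the subsequence of meaningful interactions, which is distributed exactly like a stand-alone run of the junta-election and phase clock on $x_j$ agents, invoke \cref{lem:clock} with $n$ replaced by $x_j$, and rescale from meaningful to global interactions by the factor $1/p_j=\Theta(n^2/x_j^2)$ via Chernoff bounds (the paper does this rescaling per statement rather than via geometric gaps, but that difference is cosmetic). Statement~(4) and the index-range issue $\poly(n)=\poly(x_j)$ are handled the same way. You also correctly identify the one step that cannot be cited as a black box: the protocol uses the global threshold $\ell_{\max}=\lfloor\log\log n\rfloor-2$ rather than the $\lfloor\log\log x_j\rfloor-3$ assumed by \cite{DBLP:journals/dc/BerenbrinkEFKKR21}.

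The gap is precisely there. Your supporting claim --- that on $N$ agents \emph{any} threshold $\log\log N\pm\Theta(1)$ elects a non-empty junta of size at most $N^{0.98}$ with probability $1-N^{-\Omega(1)}$ --- is too strong as stated and is not a consequence of mere ``sharp concentration'': the number $B_\ell$ of agents reaching level $\ell$ decays doubly exponentially in $\ell$, so raising the threshold by a sufficiently large additive constant drives the junta to size $o(1)$ and the non-emptiness guarantee fails. What saves the argument is that $x_j\ge\sqrt n$ pins $\ell_{\max}$ to being either $\lfloor\log\log x_j\rfloor-3$ (the designed value, where Theorem~1 of \cite{DBLP:journals/dc/BerenbrinkEFKKR21} applies verbatim) or exactly \emph{one} level higher; and even in that single-extra-level case, non-emptiness does not follow from the published statements of that paper. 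The paper's proof has to reach inside the proof of their Lemma~4 to extract the stronger bound $B_{\ell_{\max}-1}\ge x_j^{2/3}$ and then run a serialization argument (ordering the first $x_j^{2/3}$ arrivals at level $\ell_{\max}-1$ and showing the later half each advance with probability $\ge x_j^{2/3}/(2x_j)$) to conclude $B_{\ell_{\max}}=\Omega(x_j^{1/3})>0$ \whp. Your proposal names the obstacle but replaces this argument with an assertion; to be complete you would need to carry out this one-level survival analysis (or an equivalent one), restricted to the actual window $\ell_{\max}\in\{\lfloor\log\log x_j\rfloor-3,\lfloor\log\log x_j\rfloor-2\}$ rather than an arbitrary $\pm\Theta(1)$ shift.
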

\begin{proof}
The statements of this lemma would directly follow from  \cref{lem:clock} by replacing $n$ with $x_j$. 
However, the junta-election mentioned in \cref{lem:clock} assumes that a maximum level $\lfloor \log \log x_j \rfloor - 3$ is set. 
As our agents do not know the value $x_j$, we set this level to $\ell_{\max} = \lfloor \log \log n \rfloor -2$ instead.
With the following claim we show that this modification still leads to a junta of desired size if $x_j \geq \sqrt{n}$ 

\begin{claim}    
\label{claim:claim-without-name}
If $x_j \geq \sqrt{n}$ then the \FormJunta protocol \cite{DBLP:journals/dc/BerenbrinkEFKKR21} configured with maximum level $\ell_{\max} = \lfloor \log \log n \rfloor - 2$ elects a non-empty junta of $\leq x_j^{0.98}$ agents within $\ldauOmicron{x_j \log(x_j)}$ meaningful interactions and with probability at least $1-x_j^{-a}$ (for any constant $a > 0$).
\end{claim}
\begin{proof}
We start by showing the bounds on the junta size. Depending on the size $x_j \geq \sqrt{n}$ of the subpopulation $j$, we can express $\ell_{\max}$ as either (i) $\lfloor \log \log x_j \rfloor - 3$, or (ii) $\lfloor \log \log x_j \rfloor - 2$. Consider the first case. In this case, $\ell_{\max}$ matches the maximum level in specification of the \FormJunta \cite{DBLP:journals/dc/BerenbrinkEFKKR21} protocol for populations of size $x_j$. Therefore, we can apply the corresponding Theorem 1, which states that a non-empty junta of size $\leq x_j^{0.98}$ is formed with probability $1-x_j^{-a}$ (for any constant $a>0$).

In the other case $\ell_{\max} = \lfloor \log \log x_j \rfloor -2$. 
Throughout the \FormJunta process, only active agents may modify their level. 
That is, if an active agent $u$ initiates a meaningful interaction with a node $v$, then (i) it becomes inactive if $v$ has a level lower than $u$, or (ii) it remains active otherwise.\footnote{Note that the state transitions for agents on the first level $0$ are slightly different but not relevant for this proof.}
Just as in \cite{DBLP:journals/dc/BerenbrinkEFKKR21}, we denote by $B_\ell$ the number of agents that reach at least level $i$. Per definition, it must hold that $B_\ell \geq B_{\ell + 1}$ for any level $\ell \geq 0$.
First, we show that between $1$ and $x_j^{0.98}$ agents make it to level $\ell_{\max}$ with probability $1 - x_j^{-a}$ (for any constant $a>0$). 
The upper-bound on this number follows directly from Lemma 5 of \cite{DBLP:journals/dc/BerenbrinkEFKKR21}. It states that $B_{\lfloor \log \log x_j \rfloor - 3} < x_j^{0.98}$ with probability $1 - x_j^{-a}$ (again for arbitrary constants $a > 0$).
Due to the monotonicity of $B_\ell$, it follows that $B_{\ell_{\max}} < x_j^{0.98}$ as well. In order to show the lower-bound $B_{\ell_{\max}} > 1$ we would like to use Lemma 4 of \cite{DBLP:journals/dc/BerenbrinkEFKKR21}. Unfortunately, it only yields that $B_{\ell_{\max}-1} > 1$. Fortunately, in the proof of Lemma 4 they show the slightly stronger statement of $B_{\ell_{\max}-1} > x_j^{2/3}$ with probability at least $1-x_j^{-a}$.
We argue that this implies that $B_{\ell_{\max}}>1$ with probability $1-x_j^{-a}$. To show this, we rely on the coupling idea described in Footnote 6 on page 100 of \cite{DBLP:journals/dc/BerenbrinkEFKKR21}. 
That is, we serialize the points in time $\{t(l)\}_{l=1}^{x_j^{2/3}}$ at which the first $x_j^{2/3}$ agents that entered level $\ell_{\max}-1$ make their first interaction as an initiator.
At time $t(l)$, the $l$-th such agent decides whether it stays active and progresses to level $\ell_{\max}$ or becomes inactive (according to (i) and (ii) above). 
Observe that for any such agent that makes its decision after $t(x_j^{2/3} / 2)$, the probability to remain active is at least $x_j^{2/3} / (2x_j)$ (as at this time already $x_j^{2/3} / 2$ agents entered level $\ell_{\max}-1$). 
Hence, in expectation, at least $x_j^{2/3} / (2x_j) \cdot x_j^{2/3} = x_j^{1/3} / 2$ agents progress to level $\ell_{\max}$. 
From Chernoff bounds it follows that at least $x_j^{1/3}(1-o(1)) / 2$ agents reach $\ell_{\max}$ with probability $1 - x_j^{-\omega(1)}$.

It remains to show that $\ldauOmicron{x_j \log (x_j)}$ meaningful interactions suffice for the first agent to reach level $\ell_{\max}$.
This follows from Lemma 3 of \cite{DBLP:journals/dc/BerenbrinkEFKKR21}. There it is shown that even if the maximum level is unbounded, all nodes become inactive within $\ldauOmicron{x_j \log (x_j)}$ interactions and with probability at least $1-x_i^{-a}$. 
\end{proof}

Statement (1) now follows directly from this claim. For Statement (2) we also refer to this claim and note that the junta-election is driven in the subpopulation. 
Hence, the $\ldauOmicron{x_j \log (x_j)}$ \emph{meaningful interactions} need to be converted into global interactions. 
To that end, observe that $(n^2 / x_j) \cdot (1 + o(1))$ global interactions suffice for $x_j$ meaningful interactions to occur with probability $1-x_j^{-\omega(1)}$.
Because the probability for any fixed interaction to be meaningful is $x_j^2 / n^2$, this immediately follows from Chernoff bounds. 
A symmetric approach also yields that at least $(n^2/ x_j) \cdot (1-o(1))$ global interactions are required for $x_j$ many meaningful interactions to occur.
This implies that $s_j(0) = \ldauOmicron{(n^2 / x_j) \cdot  \log x_j}$. Due to the constraint on $x_j$, it holds that $\log(n) \geq \log(x_j) \geq \log(n) / 2$ and Statement (2) follows.

The proof of Statement (3) follows from Statement (3) of \cref{lem:clock} and a   conversion to global interactions.
Additionally, observe that due to the constraint on $x_j$ we have $\poly(n) = \poly(x_j)$ and note that the constant hidden in the exponent of $\poly(n)$ in \cref{lem:clock} can be made arbitrary large. 
The proof of Statement (4) again directly follows from \cref{lem:clock} together with above observation of $\poly(n) = \poly(x_j)$.
\end{proof}

\begin{lemma}
\label{lem:small-population}
Fix a subpopulation $j$ of $x_j \leq \sqrt{n}$ agents. Let $\varepsilon >0$ be an arbitrary small constant. Then, subpopulation $j$ will not elect a  junta agent before interaction $n^{1.5 - \varepsilon}$ with probability $1-n^{-\omega(1)}$.
\end{lemma}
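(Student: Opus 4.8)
The plan is to bound, for a fixed agent $u$ of opinion $j$, the probability that $u$ climbs to the top level $\ell_{\max}=\floor{\log\log n}-2$ of the \FormJunta protocol within the first $n^{1.5-\varepsilon}$ interactions, to show this probability is super-polynomially small, and then to take a union bound over the at most $\sqrt n$ agents of opinion $j$. Since a junta agent of subpopulation $j$ is, by definition, an agent of opinion $j$ that has reached level $\ell_{\max}$, this suffices.

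The first ingredient is a structural observation about the junta election of \cite{DBLP:journals/dc/BerenbrinkEFKKR21} as it is run here: an agent raises its \Level{} by exactly one each time it is the \emph{initiator} of a meaningful interaction with a partner on the same or a higher level, and it never changes its \Level{} in any other interaction (neither as a responder nor when interacting with an agent of a different opinion). Consequently, before an agent of opinion $j$ can reach level $\ell_{\max}$ it must have been the initiator of at least $L:=\ell_{\max}-1=\floor{\log\log n}-3$ meaningful interactions (one per level it ascends; I am being deliberately generous about the level-$0$ transition). This necessary condition is robust against the surrounding protocol: if the broadcast of $\Phase{}=0$ from another, larger subpopulation reaches such an agent, this only makes it \emph{stop} participating in \FormJunta and hence cannot accelerate its ascent; and since opinions are never changed during the initialization, the number of agents of opinion $j$ stays equal to $x_j$ throughout.

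Next comes an elementary tail bound for the fixed agent $u$. In a single interaction, the probability that $u$ is the initiator and its partner also has opinion $j$ equals $q=\frac{x_j-1}{n(n-1)}$, which is at most $2n^{-3/2}$ since $x_j\le\sqrt n$. Hence the number of meaningful interactions in which $u$ acts as initiator among the first $T:=n^{1.5-\varepsilon}$ interactions is dominated by a binomially distributed random variable with parameters $T$ and $q$, whose mean is $\mu:=Tq\le 2n^{-\varepsilon}\le n^{-\varepsilon/2}$ for $n$ large enough. Therefore, by a union bound over which $L$ of the $T$ trials succeed, the probability that $u$ accumulates at least $L$ such interactions is at most $\binom{T}{L}q^{L}\le\mu^{L}/L!\le n^{-\varepsilon L/2}$. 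Since $L=\floor{\log\log n}-3\to\infty$, the exponent $\varepsilon L/2$ is $\omega(1)$, so this probability is $n^{-\omega(1)}$. Combining with the necessary condition above and taking a union bound over the $x_j\le\sqrt n\le n$ agents of opinion $j$ shows that subpopulation $j$ elects a junta agent within the first $n^{1.5-\varepsilon}$ interactions with probability at most $n\cdot n^{-\varepsilon L/2}=n^{-\omega(1)}$, as claimed.

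The one step requiring genuine care is the structural observation: one must check, against the exact specification of \FormJunta in \cite{DBLP:journals/dc/BerenbrinkEFKKR21}, that there is no shortcut — levels increase by one only, only for the initiator, and only in a meaningful interaction — so that reaching $\ell_{\max}$ really does force $\Theta(\log\log n)$ meaningful initiator interactions onto some agent. Everything after that is a routine Chernoff/Markov-type estimate. The reason the bound comes out super-polynomially small rather than merely $n^{-\ldauOmega{1}}$ is that $\ell_{\max}$ grows (slowly) with $n$: raising the polynomially small per-interaction probability $q$ to the power $\Theta(\log\log n)$ already yields a factor of the form $n^{-\omega(1)}$, which comfortably absorbs the $x_j\le\sqrt n$ from the union bound.
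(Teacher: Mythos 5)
Your proposal is correct and follows essentially the same route as the paper: both arguments observe that reaching level $\ell_{\max}=\Theta(\log\log n)$ forces an agent to initiate at least $\Theta(\log\log n)$ meaningful interactions, each occurring with probability at most about $n^{-3/2}$ per step, and then bound the probability of accumulating that many successes in $n^{1.5-\varepsilon}$ trials by a binomial tail estimate of the form $\binom{T}{L}q^L = n^{-\Theta(\varepsilon \log\log n)} = n^{-\omega(1)}$, finishing with a union bound over the agents of opinion $j$. The only cosmetic difference is the form of the tail bound (the paper bounds $\Pr[\mathrm{Bin}\ge \ell_{\max}]$ via $T\cdot\Pr[\mathrm{Bin}=\ell_{\max}]$ and the PMF, whereas you use the slightly cleaner $\binom{T}{L}q^L\le \mu^L/L!$), which changes nothing of substance.
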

\begin{proof}
    In order to join the junta, agents need to increase their level from $0$ to $\ell_{\max} = \lfloor \log \log n \rfloor -2$. 
    Per definition of the junta election \cite{DBLP:journals/dc/BerenbrinkEFKKR21}, an agent $u$ may only increase its level if it interacts as an initiator (and some additional conditions hold).
    Furthermore, this increase is at most an increment of $1$. 
    Therefore, any fixed agent $u$ of subpopulation $j$ requires at least $\ell_{\max}$ meaningful interactions as an initiator to join the junta. 
    In the following we call such an interaction \emph{bad}.
    The probability that any fixed interaction is bad is $(1/n) \cdot x_j / n \le n^{-1.5}$. 
    Let $\varepsilon > 0$ be an arbitrary small constant. We show that in a sequence of $n^{1.5 - \varepsilon}$ there will be less than $\ell_{\max}$ bad interactions with probability $1-n^{-\omega(1)}$. 
    The lemma's statement then follows from a union bound over all agents in subpopulation $i$.
    
    The number of bad interactions of $u$ in this may be majorized by $\text{Bin}(n^{1.5-\varepsilon}, n^{-1.5})$. It holds that
    \begin{align*}
        &\Pr\left[\text{Bin}(n^{1.5-\varepsilon}, n^{-1.5}) \geq \ell_{\max} \right] \\&= \sum_{i=0}^{n^{1.5 - \varepsilon} - \ell_{\max}} \Pr\left[\text{Bin}(n^{1.5-\varepsilon}, n^{-1.5}) = \ell_{\max} + i \right] \\& \overset{(a)}{\leq} n^{1.5 - \varepsilon} \cdot \Pr\left[\text{Bin}(n^{1.5-\varepsilon}, n^{-1.5}) = \ell_{\max} \right]
    \end{align*}
    In step (a) we use that $\ell_{\max}$ is much larger than the expected value of this distribution.
    Hence, the terms in the sum decline with further $i$. This allows us to upper-bound each term in the sum by $p=\Pr \left[ \text{Bin}(n^{1.5-\varepsilon}, n^{-1.5}) = \ell_{\max} \right]$. 
    Using the PDF of the binomial distribution we can further bound $p$.
    \begin{align*}
        p &=\binom{n^{1.5 - \varepsilon}}{\ell_{\max}} \cdot (n^{-1.5})^{\ell_{\max}} \cdot (1-n^{-1.5})^{n^{1.5 - \varepsilon} - \ell_{\max}} \\&\leq \left( \frac{e \cdot n^{1.5 - \varepsilon}}{\ell_{\max}} \right)^{\ell_{\max}} (n^{-1.5})^{\ell_{\max}} 
        = \left(\frac{e}{\ell_{\max}}\right)^{\ell_{\max}} \cdot \frac{1}{n^{\varepsilon \cdot \ell_{\max}}}
    \end{align*}
    Since $\ell_{\max} = \Theta(\log \log n)$, this implies that $p = n^{-\omega(1)}$. Hence, $\Pr \left[\text{Bin}(n^{1.5-\varepsilon}, n^{-1.5}) \geq \ell_{\max}  \right] = n^{1.5 - \varepsilon} \cdot n^{-\omega(1)} = n^{-\omega(1)}$ for sufficiently large $n$ and the result follows.
\end{proof}

In the following we define $T_i(t)$ as the total number of tokens for opinion $i$ at interaction $t$, i.e., \[T_i(t) := \sum_{\set{u | \Opinion{u}(t) = i}}\Tokens{u}(t)\] where $\Opinion{u}(t)$ and $\Tokens{u}(t)$ denote the values of the variables $\Opinion{u}$ and $\Tokens{u}$, respectively, in interaction $t$.
Note that $T_i(0)$ is the initial support of opinion $i$.
\begin{lemma} 
\label{lem:init-filtering}
    Assume that $x_{\max} > n^{1/2 + \varepsilon}$ for a small constant $\varepsilon > 0$. Let $i$ be the initial plurality opinion and let $\hat{t}$ denote the first interaction in which $\Phase{}= 0$ for all agents. Then, \whp, $\hat{t} = \Theta( (n^2 / x_{\max}) \cdot \log n)$ and the following holds
    after interaction~$\hat{t}$ \whp:
    \begin{enumerate}[nosep]
        \item There are at most $\ldauOmicron{n / x_{\max}}$ distinct opinions left.
        \item For the initial plurality opinion $i$ it holds that $T_i(\hat t) = T_i(0)$. 
        \item Each of the roles $\Clock, \Tracker$, and $\Player$ is held by at least $n/10$ agents.
    \end{enumerate}
\end{lemma}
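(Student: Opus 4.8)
The plan is to trace, opinion by opinion, which agents keep their \Collector{} role once the first $\Phase{}=0$ is broadcast, using the subpopulation clock bounds of \cref{lem:clock-subpopulation,lem:small-population} to control every timing. Fix the desired \whp exponent, let $c_1\le c_2$ be the associated constants from \cref{lem:clock-subpopulation}, and only now choose the (so far merely ``sufficiently large'') initialization constant $c$ so that $c\cdot c_1>3c_2$; this single inequality makes all the timing estimates below fit together. Write $\hat t_0$ for the first interaction in which some agent sets $\Phase{}=0$. Before $\hat t_0$ no broadcast can interfere, so up to $\hat t_0$ every subpopulation's phase clock behaves exactly like the ``free'' clock analyzed in \cref{lem:clock-subpopulation}, and once the first $\Phase{}=0$ appears it spreads by a one-way epidemic reaching all agents within $\ldauOmicron{n\log n}$ additional interactions \whp. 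For the upper bound on $\hat t$: \cref{lem:clock-subpopulation} applies to the plurality subpopulation (size $x_{\max}>n^{1/2+\varepsilon}>\sqrt n$), whose first agent passes through zero for the $c$-th time — and thus sets $\Phase{}=0$ — by interaction $s_i(0)+\sum_{l=0}^{c-1}\bigl(s_i(l+1)-s_i(l)\bigr)\le (c+1)\,c_2\cdot(n^2/x_{\max})\log n$. For the lower bound: any agent that is first to set $\Phase{}=0$ lies in some subpopulation $j^\ast$; \cref{lem:small-population} rules out $x_{j^\ast}\le\sqrt n$ (there no junta forms before $n^{1.5-\varepsilon/2}\gg\hat t_0$), so \cref{lem:clock-subpopulation} gives $\hat t_0=s_{j^\ast}(c)\ge c\cdot c_1\cdot(n^2/x_{j^\ast})\log n\ge c\cdot c_1\cdot(n^2/x_{\max})\log n$. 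Adding the $\ldauOmicron{n\log n}=\ldauOmicron{(n^2/x_{\max})\log n}$ broadcast tail yields $\hat t=\Theta((n^2/x_{\max})\log n)$; union-bounding \cref{lem:clock-subpopulation} over the at most $\sqrt n$ subpopulations of size $\ge\sqrt n$ and \cref{lem:small-population} over all subpopulations keeps this \whp.

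For Statement~(2), note that $T_i(\cdot)$ can change only when an agent of opinion $i$ carrying $\ge 1$ token switches role, and by \cref{alg:modified-init} this happens only if the agent has $\Tokens{}=0$ or $\Phase{}=-c$ at the moment the broadcast reaches it. Hence it suffices that every agent of subpopulation $i$ has passed through zero at least once before then, and since the broadcast starts only at $\hat t_0$ it is enough to show $e_i(1)<\hat t_0$. With $x_i=x_{\max}$, statements~(2)--(4) of \cref{lem:clock-subpopulation} give $e_i(1)<s_i(2)\le 3c_2\cdot(n^2/x_{\max})\log n$, whereas $\hat t_0\ge c\cdot c_1\cdot(n^2/x_{\max})\log n$; the choice $c\cdot c_1>3c_2$ delivers $e_i(1)<\hat t_0$, so $T_i(\hat t)=T_i(0)$. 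In particular $T_i(\hat t)=x_{\max}>0$, so at least one agent of opinion $i$ is still a \Collector{} after $\hat t$, i.e.\ the plurality opinion survives the pruning.

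For Statements~(1) and~(3) I would track which opinions keep a \Collector-agent. An opinion $j$ can do so past $\hat t$ only if some agent of subpopulation $j$ passed through zero before the broadcast reached it, hence before $\hat t$. If $x_j\le\sqrt n$, this is impossible by \cref{lem:small-population} (as $\hat t<n^{1.5-\varepsilon/2}$), so every surviving opinion has $x_j>\sqrt n$; then \cref{lem:clock-subpopulation} gives $s_j(1)\ge c_1\cdot(n^2/x_j)\log n$, which must be at most $\hat t\le C\cdot(n^2/x_{\max})\log n$ for the constant $C$ from the timing bound, forcing $x_j\ge(c_1/C)\,x_{\max}$. As $\sum_j x_j=n$, there are at most $(C/c_1)\cdot n/x_{\max}=\ldauOmicron{n/x_{\max}}$ survivors, which is Statement~(1). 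For Statement~(3) I count the agents that switch role: every agent of a non-surviving opinion still has $\Phase{}=-c$ when the broadcast arrives and therefore switches; for a surviving opinion $j$ we have $x_j=\Omega(x_{\max})$, and since $\hat t=\Omega((n^2/x_{\max})\log n)$ a Chernoff bound shows subpopulation $j$ has had $\Omega(x_j\log n)\gg x_j$ meaningful interactions by $\hat t$. Re-running the counting argument of \cref{claim:lemma1-2} inside subpopulation $j$ — while more than $x_j/2$ of its agents still carry a token, a constant fraction of its meaningful interactions merge two token-piles and hence strictly decrease the number of token-carriers — shows that \whp at most $x_j/2$ agents of opinion $j$ carry a token at time $\hat t$, so the remaining $\ge x_j/2$ switch role. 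Summing over all opinions, at least $\frac12\sum_j x_j=n/2$ agents switch role, each picking \Clock, \Tracker, or \Player{} independently with probability $1/3$, so a Chernoff bound puts at least $n/10$ agents in each of these three roles \whp.

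The crux — and the main obstacle — is that all three statements are governed by the same race: the plurality subpopulation firing $c$ times (which triggers the broadcast) must occur \emph{after} its slowest agent has fired once (needed for Statement~(2)) but \emph{before} any insignificant subpopulation fires even once (needed for Statements~(1) and~(3)). Making these three events simultaneously compatible is exactly what forces the constant relation $c\cdot c_1>3c_2$ and the regime $x_{\max}>n^{1/2+\varepsilon}$, and it requires being careful that the clocks analyzed through \cref{lem:clock-subpopulation} are the \emph{free} clocks, which coincide with the actual broadcast-throttled clocks only up to $\hat t_0$. The second genuinely new ingredient is porting the token-collection rate of \cref{claim:lemma1-2} from the whole population down to an individual subpopulation and checking that it takes effect within the $\Omega(x_j\log n)$ meaningful interactions each survivor is granted.
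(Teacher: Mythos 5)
Your proof is correct and follows essentially the same route as the paper's: the same upper/lower timing bounds via \cref{lem:clock-subpopulation} and \cref{lem:small-population}, the same $\sqrt{n}$ case split for pruning insignificant opinions, and the same token-conservation argument via $e_i(1) < s_i(2) \le 3c_2\cdot(n^2/x_{\max})\log n < s_j(c)$ with the constant choice $c\cdot c_1 > 3c_2$. Your treatment of Statement~(3) — porting the good-interaction counting of \cref{claim:lemma1-2} into each surviving subpopulation — is in fact more explicit than the paper's, which merely defers to the argument of \cref{lem:preprocessing}.
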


\begin{proof}
We first show the bound on $\hat t$.
Recall that $s_i(0)$ is defined as the interaction when the first junta agent in subpopulation $i$ is elected, and $s_i(c)$ is defined as the interaction when the clock of the first agent of opinion $i$ ticks for the $c$-th time.
We will prove upper and lower bounds for $\hat{t}$ based on $s_i(c)$.

From Statements (1) and (3) of \cref{lem:clock-subpopulation} (with $a=4$) it follows that $s_i(c) \leq  (c+1) c_2 \cdot\frac{n^2}{x_{\max}}\log n$ with probability at least $1-(1+c)\cdot x_{\max}^{-4}\ge 1-( 1+c)\cdot n^{-2-4\varepsilon}$ (since we assume that $x_i \geq n^{1/2 + \varepsilon}$).
Once an agent $u$ has reached $\Phase u = 0$, this phase value is disseminated to all other agents via one-way epidemics.
It follows that, \whp, $\hat t \leq s_i(c) + \tau_{BC}$, where $\tau_{BC}$ is the \emph{broadcast time} with $\tau_{BC} \leq c_2 \cdot n^2/x_{\max} \log n$ \whp \cite{DBLP:journals/dc/AngluinAE08a}.
Ultimately, $\hat t \leq c_2 \cdot (c+2) \cdot n^2/x_{\max}\cdot\log n$ \whp.

For the lower bound, we  observe that $\hat{t}\ge s_j(c)\ge c \cdot c_1 \cdot n^2/x_{j}\cdot\log n \geq c \cdot c_1 \cdot  n^2/x_{\max}\cdot\log n$ with probability at least $1 - c\cdot n^{-2-4\varepsilon}$.
A union bound over all opinions yields $\hat t \geq c \cdot c_1 \cdot n^2/x_{\max}\cdot\log n$ \whp. Together with the upper bound, the result for $\hat t$ follows. Next, we show the three statements individually.

\paragraph{Statement~(1)}
Let $c^{*}= (c+2)\cdot c_2$ be the constant from the upper bound on $\hat{t}$ and define $c_s = c^{*} / c_1$.
In the following, we show that any insignificant opinion $j$ vanishes.
For this, let $j$ be an arbitrary but fixed opinion with $x_j < x_{\max} / c_s$.
We distinguish two cases.

\paragraph{Case 1: $x_j \geq \sqrt{n}$}
From \cref{lem:clock-subpopulation} we get that \whp
\begin{align*}
\phantomas[r]{s_j(c)}{\hat{t}}
\leq
\phantomas[l]{c \cdot c_1 \cdot \frac{n^2}{x_{\max}} \log n}{c^* \cdot \frac{n^2}{x_{\max}} \log n}
&, &
s_j(1) \geq
c_1 \cdot \frac{n^2}{x_j} \log n > c^* \cdot \frac{n^2}{x_{\max}} \log n,
\end{align*}
where the last inequality uses the definition of $c_s$ and $x_j < x_{\max} / c_s$.
Together with a union bound this implies that \whp the clocks of all agents of opinion $j$ do not tick even once.
Hence, opinion $j$ vanishes at latest in interaction $\hat t$ \whp.

\paragraph{Case 2: $x_j < \sqrt{n}$}
Similarly to before, we get from above bounds on $\hat t$ and from \cref{lem:small-population} that \whp
\begin{align*}
\phantomas[r]{s_j(c)}{\hat{t}}
=
\phantomas[l]{c \cdot c_1 \cdot \frac{n^2}{x_{\max}} \log n}{\ldauOmicron{n^{3/2 - \varepsilon} \cdot \log n}}
&&\text{and} &&
s_j(1) \geq
\phantomas[l]{c_1 \cdot \frac{n^2}{x_j} \log n > c^* \cdot \frac{n^2}{x_{\max}} \log n,}{s_j(0) \geq n^{3/2 - \varepsilon'}.}
\end{align*}
Together with $\varepsilon'<\varepsilon$ and a union bound this again implies that \whp the clocks of all agents of opinion $j$ do not tick even once.
Hence, also in this case opinion $j$ vanishes at latest in interaction $\hat t$ \whp.

\smallskip

Together the two cases show that any opinion $j$ with $x_j<x_{\max}/c_s$  \whp does not compete in the tournaments.
Since we have $n$ agents, at most $n \cdot c_s/x_{\max}=\ldauOmicron{n/x_{\max}}$ opinions remain after $\hat{t}$ interactions \whp.

\paragraph{Statement~(2)}
To show the statement we need to show that the clocks of any agent of the initial plurality opinion $i$ pass through zero at least once before the first agent $u$ hits $\Phase{u} = 0$.
Recall that $s_j(c)$ is the interaction when the clock of any agent with opinion $j$ passes through zero for the $c$-th time (this is the first interaction when any agent $u$ sets $\Phase u = 0$).

In the following we only consider opinions $j$ with $x_j = \Omega(x_i) $. The statement for smaller opinions follows from the above proof of Statement~(1).
There we have shown that the clocks of agents of smaller opinions will not pass through zero even once before interaction $\hat t$.
For significant opinions $j$, we observe $s_j(c)\geq c\cdot c_1 \cdot n^2/x_{\max}$ \whp as shown in the beginning of the proof.
From the bound on $s_j(c)$ and from \cref{lem:clock-subpopulation} we get that \whp
\begin{align*}
s_j(c) \geq c \cdot c_1 \cdot \frac{n^2}{x_{\max}} \log n
&&\text{and} &&
e_i(1) \leq
\phantomas[l]{c_1 \cdot \frac{n^2}{x_j} \log n > c^* \cdot \frac{n^2}{x_{\max}} \log n,}{s_i(2) \leq 3c_2 \cdot \frac{n^2}{x_{\max}} \log n.}
\end{align*}
By choosing a sufficiently large constant $c > 3c_2/c_1$ in \cref{alg:modified-init}, this yields $s_j(c) > s_i(2)$ \whp.
In other words, at the time when the first agent's clock has passed through zero for the $c$-th time, the clocks of all agents of opinion $i$ have passed through zero at least once.
In particular, $\Phase u \neq -c$ for any agent $u$ with opinion $i$ in that interaction.

The total number of tokens $T_i(t)$ of opinion $i$ can only change in some interaction $t$ if an agent $u$ of opinion $i$ adopts another role in \cref{alg:modified-init:10} of \cref{alg:modified-init} while $\Tokens u > 0$.
However, we have just shown that when the first agent $v$ sets $\Phase v = 0$, any agent $u$ of opinion $i$ has $\Phase{u} \neq -c$.
Hence it follows that agent $u$ can adopt a different role in 
\cref{alg:modified-init:10} of \cref{alg:modified-init} only if agent $u$ had $\Tokens u = 0$ in \cref{alg:modified-init:9} of \cref{alg:modified-init}.
Therefore, such an interaction does not change the total number of tokens for opinion $i$ and the statement follows.

\paragraph{Statement~(3)}
The proof follows from similar arguments as the proof of Statement (2) of \cref{lem:preprocessing}.
\end{proof}

\subsection[Proof]{Proof of \cref{thm:result-no-order-phase-clock}.}

We split the proof of \cref{thm:result-no-order-phase-clock} into three parts, the proof of the correctness of the result, the proof of the runtime, and the proof of the state space requirements.
Essentially, the theorem follows from \cref{lem:init-filtering} for the correctness of the modified initialization phase (\cref{alg:modified-init}) and from Statement (2) of \cref{thm:result-order} for the correctness of \AlgSimple.
\begin{proof}[Proof of \cref{thm:result-no-order-phase-clock}, Correctness of the Result]
\mbox{} \\
In \AlgComplex, all agents start with the modified initialization phase defined in \cref{alg:modified-init}.
After this initialization, they execute the tournament according to the variant of \AlgSimple which does not need an order among the opinions (see \cref{sec:removing-the-order}).
By Statement (3) of \cref{lem:init-filtering} we get that all roles in $\set{\Clock,\Tracker,\Player}$ are held by at least a constant fraction of agents at time $\hat t$.
The number of agents with role $\Collector$ may be asymptotically much smaller, however, their number does not affect the outcome of \AlgSimple.
Statement (2) of \cref{lem:init-filtering} guarantees that at the beginning of the tournaments the initial plurality still has all of its initial tokens.
It follows along the lines of the proof of Statement (1) of \cref{thm:result-order} that this opinion will be the defender at the end of the tournament, and all agents will output this opinion after the final broadcast as described in \cref{sec:proof-sketch}.
\end{proof}

\vspace{-1\baselineskip}

\begin{proof}[Proof of \cref{thm:result-no-order-phase-clock}, Runtime of the Algorithm]
\mbox{} \\
From \cref{lem:init-filtering} we get that, \whp, after $\hat{t} = \ldauOmicron{ n^2 / x_{\max}  \cdot \log n}$ interactions all agents $u$ have $\Phase u = 0$ in \cref{alg:modified-init}.
The protocol then proceeds according to the variant of \AlgSimple  which does not require an order among the opinions described in \cref{sec:removing-the-order}.
By Statement (2) of \cref{lem:init-filtering}, at most $\ldauOmicron{n / x_{\max}}$ opinions have at least one \Collector agent each, \whp.
If there is not a single \Collector agent left for some opinion, this opinion cannot become a challenger in any of the tournaments.
Therefore, the total number of tournaments executed in \AlgSimple is bounded \whp by $\ldauOmicron{n / x_{\max}}$.
As argued in the proof of Statement (1) of \cref{thm:result-order}, each tournament takes $\ldauOmicron{n \log n}$ interactions \whp, and the modified \AlgSimple also needs to perform a leader-election, which takes $\ldauOmicron{n \log^2 n}$ interactions \cite{DBLP:journals/jacm/GasieniecS21}.
Together, we conclude that \AlgComplex has a runtime of $\ldauOmicron{ n^2 / x_{\max} \cdot \log n + n \log^2 n }$ interactions \whp.
\end{proof}

\vspace{-1\baselineskip}

\begin{proof}[Proof of \cref{thm:result-no-order-phase-clock}, States of the Algorithm]
\mbox{} \\
\AlgComplex requires the states used in the modified initialization (\cref{alg:modified-init}) and the states used by \AlgSimple (Statement (2) of \cref{thm:result-order}).
In \cref{alg:modified-init}, all \Collector agents need to store the set of states $\cS_{c}$ of size $\Theta(\log \log n)$ required to run the junta-based phase clocks. Additionally, the size of the $\Phase{}$ variable is increased by a constant, starting now at $-c$.
The remaining states have the same size as in \AlgSimple.
Together, this gives us the claimed state space size of $\Theta( k \cdot \log \log n + \log n)$.
\end{proof}

\section{Conclusions}

We present population protocols that efficiently solve exact plurality consensus with high probability.
While it is known that always correct, exact plurality with $k$ opinions needs $\Omega(k^2)$ states per agent, we show that a small failure probability leads to efficient exact plurality consensus with $\ldauOmicron{k + \log n}$ states, \whp.
The runtime can further be reduced at the cost of a small additional factor of $\ldauOmicron{\log \log n}$ states.

Our protocols use majority, leader election, and junta election protocols as a black box.
Improving the guarantees of these black boxes would also carry over to our protocols.
For example, a leader election protocol that has a \emph{with high probability} runtime of $\ldauOmicron{\log n}$ would immediately improve our runtime.
Similarly, both a constant state majority protocol and a constant state junta election protocol (that work \emph{with high probability}) would immediately improve our state space bounds.
Furthermore, we believe that
$\Omega(n/x_{\max})$ is a natural lower bound for the runtime, and thus the possible improvements mentioned above would lead to a state- and time-optimal exact plurality consensus protocol.
\enlargethispage{4\baselineskip}

In our main result we prune small opinions in order to reduce the number of tournaments.
We conjecture that this yields almost optimal protocols.
In order to further improve the runtime (possibly at the expense of a slightly increased state complexity) we believe that additional techniques are required.
In particular, it would be interesting to find another, more efficient way than pairwise comparison of opinions via tournaments to identify the plurality opinion.

	\printbibliography
\newpage
\appendix
\section*{Appendix}

\section{Technical Details of the \AlgSimple Analysis}
\label{sec:proof-theorem1}

\paragraph{Application of the Majority Protocol}
\label{obs:apply-majority-protocol}
In this paragraph, we present how the exact majority protocol from~\cite{DBLP:conf/focs/DotyEGSUS21} is integrated into our \AlgSimple.
That protocol determines the majority between $n$ agents having either opinions $A$ or $B$.
Every agent $u$ has a variable called $\Output{u}$ which finally stores (unless there is a tie) the majority opinion.
The initial opinion is stored in $\Input{u}$. 
The algorithm uses a variable $\Bias{u}$ and sets $\Bias{u} = +1$ if $\Input{u}=A$  and $\Bias{u} = -1$ if $\Input{u} = B$.
In our protocol we execute the exact majority protocol among the $\Player$-agents only. Hence, each $\Player$-agent needs the same set of states (additionally to the ones given in \cref{sec:algorithm}) as the exact majority protocol from \cite{DBLP:conf/focs/DotyEGSUS21}.  
\AlgSimple now is initialized as follows. 
A $\Player$-agent $u$ with $\Playeropinion{u} \neq U$ sets $\Input{u} = \Playeropinion{u}$.
A $\Player$-agent $u$ with $\Playeropinion{u} = U$ sets $\Bias{u} = 0$.
With this initialization the protocol determines the majority in time $\ldauOmicron{n\log n}$, since the number of $\Player$-agents is at least $\ldauOmega{n}$.
Note that in contrast to \cite{DBLP:conf/focs/DotyEGSUS21} we do not need the slow and always correct algorithm used since we are only interested in results that hold with high probability.
We assume for every $\Player$-agent $u$ that $\Playeropinion{u}$ stores the output of the protocol.

\paragraph{Proof of \cref{thm:result-order}, Runtime of Statement (1)}

We prove \cref{thm:result-order} via an induction using the following lemma which provides an invariant for our algorithm.
In the following we assume that the phase clocks are synchronized, the length of the phases is sufficient, and   that the even phases are separated from each other. This follows from \cite{DBLP:conf/soda/AlistarhAG18}. 
For $1\le j<k, 0\leq i \leq 9$ let $t_i(j)$ be
the interaction in which the first agent enters phase $i$ for the $j$-th time. Let $\ell_j$  
be the plurality opinion $\ell$ among $1,\ldots j$.

\begin{lemma}
\label{lem:setup}
 Fix a $j$ with $1\le j<k$ and assume that the first $j-1$ tournaments worked correctly. 
 Then we have \whp

\begin{enumerate}\itemsep0pt

    \item At time $t_2(j)$ all $\Collector$-agents $u$ with  opinion  $j+1$ have $\Challenger{u}=\ttrue$. All other $\Collector$-agents have $\Challenger{u}=\tfalse$.
    Furthermore, all $\Collector$-agents $v$ not having opinion $\ell_{j}$ have $\Defender{v}=\tfalse$.
    \item Let $\mathcal{A}$ be the set of agents $u$ with $\Playeropinion{u}=A$ and let $\mathcal{B}$ be the set of agents with $\Playeropinion{u}=B$. 
    At time  $t_6(j)$ we have $|\mathcal{A}| \geq |\mathcal{B}|$ iff $x_{\ell_j}(0) \geq x_{j+1}(0)$.

    \item If $|\mathcal{A}|\ge|\mathcal{B}|$ ($|\mathcal{A}|<|\mathcal{B}|$) at time $t_8(j)$ we have  $\Playeropinion{u} \in \{A,U\}$ ($\Playeropinion{u}=B$) for all $\Player$-agents $u$.
    \item At time $t_0(j+1)$ all $\Collector$-agents $u$ with  opinion  $\ell_{j+1}$ have $\Defender{u}=\ttrue$.
\end{enumerate}

\end{lemma}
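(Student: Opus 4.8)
The plan is to prove \cref{lem:setup} by induction on $j$, carrying its four statements as one joint invariant that we re-establish for each tournament, with each statement feeding into the next. Throughout we use the phase-clock guarantees of \cite{DBLP:conf/soda/AlistarhAG18} recalled above: \whp all agents agree on the current even phase during the corresponding window, each window has length $\Theta(\log n)$, and consecutive even phases are separated by odd phases in which \Collector- and \Player-agents are idle. A window of length $\Theta(\log n)$ is long enough, \whp, both for a one-way epidemic to complete and --- since by \cref{lem:preprocessing} each of \Clock, \Tracker, and \Player is held by at least $n/10$ agents --- for every agent to meet some agent of a prescribed role. The base case $j=1$ follows from \cref{lem:preprocessing}: right after initialization exactly the \Collector-agents of opinion $1$ carry the defender bit, no \Collector-agent carries the challenger bit, and the tracker counter equals $2$ (as noted after \cref{alg:tracker}); from there the argument coincides with the inductive step, reading \cref{lem:preprocessing} in place of ``correctness of tournament $j-1$''.

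For the inductive step assume tournaments $1,\dots,j-1$ are correct, so that at time $t_0(j)$ the defender bit rests exactly on the opinion-$\ell_j$ \Collector-agents and no \Collector-agent holds the challenger bit (this uses Statement~(4) for $j-1$ together with the effect of the Conclusion phase of tournament $j-1$ on these bits). \emph{Statement~(1):} in the Setup phase every opinion-$(j+1)$ \Collector-agent meets a \Tracker-agent, whose counter equals $j+1$ (it was incremented once, upon entering phase $0$), and sets its challenger bit; \whp all of them do so within the window, no other \Collector-agent does, and the defender bits are untouched between $t_0(j)$ and $t_2(j)$. \emph{Statement~(2):} in the Setup phase every defender \Collector-agent initializes $\ell[u]$ to its token count and every challenger \Collector-agent to the negative of its token count, with $\ell[u]=0$ for all other \Collector-agents; since the token mass of an opinion is conserved after initialization and equals its initial support, $\sum_u \ell[u]=x_{\ell_j}(0)-x_{j+1}(0)$. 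The Cancellation phase runs the discrete load balancer of \cite{DBLP:conf/ipps/BerenbrinkFKK19, DBLP:journals/jap/MocquardRSA21}, which conserves this sum and, \whp within the window, drives every $\ell[u]$ to one of the two integers straddling the average, so that the nonzero $\ell$-values all share the sign of $x_{\ell_j}(0)-x_{j+1}(0)$. In the Lineup phase each \Collector-agent hands one unit of $\ell[u]$ to a fresh \Player-agent, turning it into $A$ if $\ell[u]>0$ and into $B$ if $\ell[u]<0$; since there are $\ge n/10$ \Player-agents (\cref{lem:preprocessing}), either all of the $|x_{\ell_j}(0)-x_{j+1}(0)|$ converted tokens land on distinct \Player-agents, giving $|\mathcal{A}|-|\mathcal{B}|=x_{\ell_j}(0)-x_{j+1}(0)$, or this number exceeds the pool and all converted \Player-agents then carry the same opinion --- in either case $|\mathcal{A}|\ge|\mathcal{B}|$ iff $x_{\ell_j}(0)\ge x_{j+1}(0)$, while idle \Player-agents keep opinion $U$.

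\emph{Statement~(3):} in the Match phase the \Player-agents run the exact-majority protocol of \cite{DBLP:conf/focs/DotyEGSUS21}, initialized as described in the opening paragraph of this section; since there are $\Omega(n)$ of them it converges within the window \whp, leaving every \Player-agent with opinion in $\{A,U\}$ if $|\mathcal{A}|\ge|\mathcal{B}|$ and with opinion $B$ otherwise. \emph{Statement~(4):} in the Conclusion phase every challenger \Collector-agent clears its challenger bit (via the ``do once'' of \cref{ln:doonceexample:start,ln:doonceexample:end} in \cref{alg:setup}), and, if the challenger won, also sets the defender bit while the former defenders clear theirs. Combined with the phase broadcast of \cref{alg:synchronize1,alg:synchronize2} completing within the window and the ``do once'' semantics (each \Collector-agent acting at most once), this leaves the defender bit exactly on the opinion-$\ell_{j+1}$ \Collector-agents at time $t_0(j+1)$, which closes the induction.

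The hard part will be \emph{Statement~(2)}: using the discrete load balancer of \cite{DBLP:conf/ipps/BerenbrinkFKK19, DBLP:journals/jap/MocquardRSA21} as a black box while still certifying that $(|\mathcal{A}|,|\mathcal{B}|)$ faithfully encodes the sign of $x_{\ell_j}(0)-x_{j+1}(0)$ in \emph{every} regime. One must argue that after cancellation all $\ell$-values share a single sign (or are $0$), that the number of nonzero ones equals $|x_{\ell_j}(0)-x_{j+1}(0)|$, and that when this quantity is below the size of the \Player-pool the corresponding tokens land on distinct \Player-agents still holding opinion $U$ within the $\Theta(\log n)$-time Lineup phase, whereas when it exceeds the pool it suffices that the converted \Player-agents are unanimous. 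This is exactly where the ``we do not distinguish whether the majority is already determined after the Cancellation phase'' simplification (\cref{sec:tournament}) and the ``the \Player-pool is large enough'' part of the invariant (proof sketch) have to be combined. Everything else is bookkeeping of which lines of \cref{alg:synchronization,alg:tracker,alg:init,alg:setup} touch the defender and challenger bits, together with the standard epidemic and Chernoff estimates that the $\Theta(\log n)$-time phases are long enough.
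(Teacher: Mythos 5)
Your overall decomposition matches the paper's: Statement~(1) via Tracker encounters in the Setup window, Statement~(2) via cancellation plus lineup, Statement~(3) via the exact-majority black box, Statement~(4) via the do-once bookkeeping in the Conclusion phase, all riding on the phase-clock and $n/10$-per-role guarantees. Statements (1), (3), and (4) are handled essentially as in the paper.

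There is, however, a genuine gap in your plan for Statement~(2), and it sits exactly at the spot you flag as ``the hard part.'' You assert that after the Cancellation phase every $\ell[u]$ lies in one of the \emph{two} integers straddling the average, so that all nonzero $\ell$-values share the sign of $x_{\ell_j}(0)-x_{j+1}(0)$ and their count equals $\abs{x_{\ell_j}(0)-x_{j+1}(0)}$. The load balancers of \cite{DBLP:conf/ipps/BerenbrinkFKK19, DBLP:journals/jap/MocquardRSA21} do not deliver this: the guarantee (as the paper uses it) is $\ell[u]\in\set{\overline{\bm{\ell}}-1,\overline{\bm{\ell}},\overline{\bm{\ell}}+1}$ with $\overline{\bm{\ell}}$ the \emph{rounded} average. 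In the exact regime where the bias is $1$ (or anything $o(\abs{C})$), $\overline{\bm{\ell}}=0$ and both $+1$'s and $-1$'s survive cancellation, so the nonzero values are mixed-sign and their number $\hat L=\sum_u\abs{\ell[u]}$ can far exceed $\abs{L}$. Your dichotomy (``either all $\abs{L}$ tokens land on distinct players, or $\abs{L}$ exceeds the pool and all converted players agree'') therefore does not cover the relevant case. The paper's proof replaces it with a different case split: either $\hat L\le\abs{P}/2$, in which case \emph{every} surviving token of \emph{either} sign is recruited onto a distinct undecided \Player-agent and $\abs{\mathcal A}-\abs{\mathcal B}=L$ exactly; or $\hat L>\abs{P}/2$, which (using that cancellation leaves at least $\abs{C}/4$ agents at $\ell=0$ and that $\abs{P}\ge\abs{C}/2$ \whp) forces $L\ge\abs{C}/2$, hence $\ell[u]\in\set{0,1,2}$ for all $u$ and every recruited \Player-agent becomes an $A$. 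You need this (or an equivalent) two-case argument; without it the claim that $(\abs{\mathcal A},\abs{\mathcal B})$ encodes the sign of the bias is unsupported precisely when the bias is small, which is the case the exactness of the protocol hinges on.
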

\begin{proof}
From \cref{lem:preprocessing} it follows that each role \Collector, \Player, \Clock, and \Tracker is held by at least $n/10$ agents, \whp.
We denote the set of $\Player$-agents by $P$ and the set of $\Collector$-agents by $C$.
In the following we prove the statements one after the other.

\paragraph{Statement~(1)}
 First we show that in Phase~0 of tournament $j$ each agent interacts at least twice with a $\Tracker$-agent.  \whp we have at least $n/10$ $\Tracker$-agents, hence the probability to interact in a fixed step with a $\Tracker$-agent is at least $1/10$. The claim now follows from Chernoff bounds.
 
 Fix an agent $u$ with opinion $j+1$. Since agent $u$ interacts at least once with a $\Tracker$-agent in Phase~0, $u$ sets $\Challenger{u}=\ttrue$ in Line $3$ of \cref{alg:setup}.
 $\Defender{u} =\tfalse$ follows from the initialization phase (see \cref{alg:init}).
 Now consider an agent $u$ with opinion $\ell$ $\not\in \{\ell_j,j+1\}$. If $\ell>j+1$ $\Challenger{u} = \Defender{u} =\tfalse$ follows from the initialization phase (see \cref{alg:init}). Now assume that $\ell<j+1 $. If opinion $\ell>1$  the opinion was $\Challenger{}$ opinion in tournament $\ell-1$. If $\ell=1$ the opinion was the defender in the first tournament. In either case,  $\Challenger{u}$ and $\Defender{u}$ are set to $=\tfalse$  in Line $17-21$ of \cref{alg:setup} or in Line $5$ of \cref{alg:init}.

\paragraph{Statement~(2)} In this proof we assume  w.l.o.g.\ that $x_{\ell_j}(0) \geq x_{j+1}(0)$. 
Fix a $\Collector$-agent $u$. From Statement~(1) and Statement~(4) of the previous tournament it follows that in Line $5$ of \cref{alg:setup} $\ell[u]$ is set to $\Tokens{u}$ if $u$ is a defender agent and
 to $-\Tokens{u}$ if $u$ is a challenger agent. In Line $8$ of \cref{alg:setup} the defender and challenger agents perform a load balancing protocol for the rest of Phase~2.
The protocol is analyzed in \cite{DBLP:conf/ipps/BerenbrinkFKK19, DBLP:journals/jap/MocquardRSA21}).
We define $L= \sum_{u:\Collector} \ell[u]$ as the total load at the end of Phase~2 and $\hat{L} = \sum_{u:\Collector}|\ell[u]| $ as the total remaining load .
From \cite{DBLP:conf/ipps/BerenbrinkFKK19, DBLP:journals/jap/MocquardRSA21} it follows that at the end of Phase~2 we have  \whp (a) $L = x_{\ell_j}(0) - x_{j+1}(0)$  and (b)
 for every $\Collector$-agent $u$ it holds either $\ell[u] \in \{ 0,1,2 \}$ if $x_{\ell_j}(0) - x_{j+1}(0) \geq |C|/2$  or $\ell[u] \in \{ -1, 0,1 \}$, otherwise.

In Phase~2 of \cref{alg:setup} every $\Collector$-agent $u$ recruits $|\ell[u]|$ many undecided $\Player$-agents $v$. 
If $\ell[u] > 0$ it sets $\Playeropinion{v}=A$ and $\ell[u] = \ell[u] -1$.
If $\ell[u] < 0$ it sets $\Playeropinion{v}=B$ and $\ell[u] = \ell[u] +1$.
For rest of the $\Player$-agents it remains $\Playeropinion{v}=U$.
This is done in Lines $10-12$ of \cref{alg:setup}.
It remains to show that each of these agents can recruit the sufficient amount of $\Player$-agents.
We will show the following claim. 
\begin{claim}
Assume $|P|$ is the number of $\Player$-agents. Fix the configuration at time the end of Phase~2. \Whp we have either 
\begin{itemize}[nosep]
    \item[(i)]  $\hat{L} \leq |P|/2$ , or
    \item[(ii)] for every $\Collector$-agent $u$ we have $\ell[u] \in \{ 0,1,2 \}$ and then there exists a $\Collector$-agent $u$ with $\ell[u]> 0$.
\end{itemize}
\end{claim}
\begin{proof}
Statement~(i) follows directly for $x_{\ell_j}+x_{j+1}\leq |P|/2$. 
Hence, for the rest of the proof we can assume that
$\hat{L}> |P|/2$.  (Note that (i) would immediate follow if $|P|\ge 2 |C|$ which is, unfortunately, quite unlikely).
From the analysis in \cite{DBLP:conf/ipps/BerenbrinkFKK19} it follows that 
\whp at least $|C|/4$ agents $u$ have $\ell[u]=0$ (this holds due to the length of the phase and the fact that in Line $8$ of \cref{alg:setup}  \enquote{$+1$}-s are canceled against \enquote{$-1$}-s).
From Chernoff bounds it follows that \whp $|P|\ge |C|/2$.
Statement~(ii) follows directly for $x_{\ell_j}(0) - x_{j+1}(0) \geq |C|/2$ and the fact that $L = x_{\ell_j}(0) - x_{j+1}(0)$.
The claim follows from a union bound over both statements.
\end{proof}
At last it remains to show that Statement~(2) follows by the claim and the fact that $L = x_{\ell_j}(0) - x_{j+1}(0)$.
Assume Statement~(i) holds. Chernoff bounds show that every $\Collector$-agent $u$ is able to recruit $|\ell[u]|$ many $\Player$-agents  in $\ldauOmicron{n\log n}$ interactions \whp.

Now assume Statement~(ii) holds instead. 
That is, no $\Player$-agent $u$ is able to sets $\Playeropinion{u} = B$ in Line~$11$ of \cref{alg:setup} and hence, it is sufficient that at least a $\Collector$-agent $u$ with $\ell[u] > 0$ is able to recruit a $\Player$-agent.
Again, this follows from Chernoff bounds for $\ldauOmicron{n\log n}$ interactions \whp.
Then  Statement~(2) follows from a union bound.
\paragraph{Statement~(3)}
We  execute the exact majority protocol from~\cite{DBLP:conf/focs/DotyEGSUS21} among the $\Player$-agents in Phase~6.
(The detailed explanation can be found at the beginning of \cref{obs:apply-majority-protocol}.) 
Since the $\Player$ size is at least $n/10$, Chernoff bounds provide sufficiently many meaningful interactions in $\ldauTheta{n\log n}$ interactions \whp.
Together with Statement~(2) this implies the claim.

\paragraph{Statement~(4)}
Similarly to the proof of Statement~(1) we can argue  that in Phase~8 of tournament $j$ each agent interacts at least twice with a $\Player$-agent. From Statement~(3) it follows that every $\Player$-agent $v$ has $\Playeropinion{v}=A$ ($\Playeropinion{v}=B$, respectively) if the defender (challenger, respectively) opinion won the majority protocol in Phase~6. Note  that the competition in the $j$-th tournament is between opinion $j+1$ and $\ell_j$.

First let us assume that for each $\Player$-agent $v$ we have $\Playeropinion{v}=B$, i.e., the challenger opinion won. Consider $\Collector$-agent $u$.  
In Phase~8 $\cref{alg:setup}$ sets (see Line 17-19)
$\Defender{u} =\Challenger{u}$, i.e.,  every $\Collector$-agent $u$ with the challenger opinion has $ \Defender{u}=\ttrue$, and afterwards \cref{alg:setup} sets $\Challenger{u}=\tfalse$.

Now we assume that for each $\Player$-agent $v$ we have $\Playeropinion{v}\in \{A,U\}$.
In that case the defender opinion won the competition and we have  $ \Defender{u}=\ttrue$ for all $\Collector$-agents with the defender opinion, as before. 
\end{proof}

\begin{proof}[Proof of Runtime for Statement~(1) of \cref{thm:result-order}]
We first apply \cref{lem:preprocessing}.
Then it holds  that $t_0(1) = \ldauOmicron{n\cdot (k+\log n)}$ and the population is partitioned into the  roles $\Player, \Tracker, \Clock$ and $\Collector$ where each role consists of at least $n/10$ agents \whp.
The $\Clock$-agents run the phase clock from \cite{DBLP:conf/soda/AlistarhAG18} that provides synchronized phases of length $\ldauTheta{n\log n}$ \whp.
In particular, the separation between even phases is sufficiently large, i.e., it last longer than the time to broadcast a message via one-way epidemic (see \cite{DBLP:journals/dc/AngluinAE08a}).
Now we do an induction over $k-1$ tournaments in order to show that opinion $\ell_k$ is the defender at the end of the tournaments.
At the beginning of the first tournament at time $t_0(1)$ \cref{lem:preprocessing} implies  that opinion $1$ is the initial defender \whp, i.e., $\ell_1 = 1$.
The induction step from tournament $j$ to $j+1$ follows by \cref{lem:setup} \whp.
Thus, the initial plurality opinion is the defender at the end of the last tournament \whp.
At last all agents agree on the unique defender opinion which follows by a final broadcast in $\ldauOmicron{n\log n}$ interactions \whp.
Summing up over the initialization phase and all tournaments, \AlgSimple requires $\ldauOmicron{n\cdot k\cdot \log n}$ interactions in total.
\end{proof}

\section{Removing the Order}%
\label{sec:removing-the-order}

In this section we explain how to remove the assumption that there is an order among the $k$ opinions.
Recall that in \AlgSimple we let opinion $1$ be the first defender and opinion $i+1$ be the challenger of the $i$-th tournament.
The number of tournaments was counted in the $\Tc{}$ variable of $\Tracker$-agents.
Instead, we now assigning the \Tracker agent a slightly different task, and we use a unique leader agent (from the set of \Tracker agents) that randomly \emph{samples} the next challenger before each match.

The leader agent interacts until it encounters an opinion that has not yet participated in a tournaments.
Then the leader agent informs all $\Collector$-agents $u$ with that opinion that they are the next challenger.
Unfortunately, this cannot be done efficiently for each opinion: if $x_j = o(n)$ for some opinion $j$, it takes too long for the leader to interact with an agent of that opinion.
To solve this we use the remaining $\Tracker$-agents.
\Tracker agents copy opinions that have not yet competed in a tournament (using the same number of states as for the counter \Tc{} before).
This effectively amplifies the number of agents having an opinion that has not yet participated in a tournament, making this opinion \emph{visible} to the leader agent.

\paragraph{Challenger and Defender Selection}
Assume for now that we have a unique leader agent.
At the beginning of each tournament $\ell$ in Phase~0 the leader agent and $\Tracker$-agents sample until they meet a $\Collector$-agent with an opinion $j$ that has not yet participated in a tournaments.
As soon as the leader agent has sampled such an opinion $j$ (either from a $\Collector$-agent directly or from a $\Tracker$-agent) it starts to broadcast among the $\Tracker$-agents and the $\Collector$-agents that opinion $j$ is the challenger of tournament $\ell$. (This broadcast is done on a constant fraction of all agents and thus concludes \whp within one phase.)
Now when a $\Collector$-agent $u$ with $\Opinion{u}=j$ interacts with an agent $v$ that knows the challenger opinion, it sets $\Challenger{u} \gets \ttrue$ and becomes a challenger agent.
Note that we can implement this broadcast using one additional bit in the state space. 
Note that we can use the same procedure to select the initial defender before the tournament starts.

\begin{lemma} 
\label{lem:challenger-defender-selection}
Assume a unique leader agent exists. 
Then a challenger (defender) opinion is selected in $\ldauOmicron{n\log n}$ interactions \whp. 
\end{lemma}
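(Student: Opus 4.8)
The plan is to split the selection into three consecutive stages and to bound each of them by \ldauOmicron{n\log n} interactions \whp: an \emph{amplification} stage, in which \Tracker-agents spread not-yet-competed (``fresh'') opinions until a constant fraction of the population displays some fresh opinion; a \emph{sampling} stage, in which the leader meets such an agent and commits to a concrete challenger opinion $j$; and a \emph{broadcast} stage, in which $j$ is disseminated so that every \Collector-agent with $\Opinion{}=j$ learns it is the challenger. Defender selection is handled by the identical argument, run before the first tournament, when every opinion is still fresh and the seed set is only larger. Throughout I rely on \cref{lem:preprocessing}, which gives at least $n/10$ \Tracker-agents and at least $n/10$ \Collector-agents \whp, and on phase-clock synchronization, which ensures that by the start of the current Phase~$0$ every broadcast of the previous tournament has finished, so that each \Tracker-agent currently carries only a fresh opinion or none, while at least one \Collector-agent carries a fresh opinion (tokens are conserved and only \Collector-agents hold them, so every opinion keeps at least one \Collector-agent; and at least one opinion has not competed, for otherwise there is no challenger to select and the routine is not invoked).

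For the amplification stage, let $S$ be the set of agents currently displaying a fresh opinion: \Collector-agents whose $\Opinion{}$ has not yet competed, together with \Tracker-agents that have copied such an opinion. Initially $\abs{S}\ge 1$. Before the leader commits, no opinion turns from fresh to used, so $S$ is monotone non-decreasing: whenever a \Tracker-agent outside $S$ interacts with an agent in $S$ it copies a fresh opinion and joins $S$ (this is the step where it is essential that the protocol lets a \Tracker-agent copy a fresh opinion from \emph{any} partner carrying one, in particular from another \Tracker-agent, not only from \Collector-agents). Hence $S$ grows at least as fast as a one-way epidemic on the $\ge n/10$ \Tracker-agents seeded by a single agent, so $\abs{S}\ge n/20$ within \ldauOmicron{n\log n} interactions \whp by the standard one-way-epidemic bound \cite{DBLP:journals/dc/AngluinAE08a}. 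Different \Tracker-agents may hold different fresh opinions, but this is harmless: the sampling stage only needs the leader to see \emph{some} fresh opinion, and the bound above controls exactly the growth of the union.

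Once $\abs{S}\ge n/20$, in each interaction the leader meets an agent of $S$ with probability \ldauOmega{1}, so after a further \ldauOmicron{n\log n} interactions the leader has sampled a fresh opinion $j$ \whp and switches to broadcasting $j$ among the \Tracker- and \Collector-agents. This broadcast is again a one-way epidemic, now on a population of size $\ge n/10$, and therefore infects every agent of that population within \ldauOmicron{n\log n} interactions \whp; in particular each of the $\ge 1$ \Collector-agents with $\Opinion{}=j$ sets $\Challenger{}\gets\ttrue$. Adding the three \ldauOmicron{n\log n} bounds gives the claimed total, and choosing the phase-clock constant large enough makes the whole procedure fit inside Phase~$0$. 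In \AlgComplex the same argument applies verbatim once ``fresh opinion'' is read as ``fresh opinion that still has a \Collector-agent'': the pruned insignificant opinions never produce such an agent and are therefore simply never selected.

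The main obstacle is the amplification stage: when the fresh opinion has tiny initial support $x_j=o(n)$ the leader cannot find it directly in time, so the proof must genuinely exploit that \Tracker-agents copy fresh opinions from one another and that $S$ stays monotone up to the commit. The points requiring care are precisely these structural facts — that no opinion silently becomes ``used'' during Phase~$0$ before the commit, that the state inherited from the previous tournament has fully settled so stale opinions are not resampled, and that the seeding \Collector-agent persists — all of which follow from phase-clock synchronization together with \cref{lem:preprocessing}, but which one must verify explicitly.
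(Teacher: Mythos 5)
Your proposal is correct and follows essentially the same route as the paper's proof: it uses the one-way epidemic among the $\ge n/10$ \Tracker-agents to amplify a fresh opinion, lets the leader sample it with constant per-interaction probability, and then broadcasts the chosen opinion back so the relevant \Collector-agents set their challenger bit, all via Chernoff/epidemic bounds and \cref{lem:preprocessing}. The only cosmetic difference is that the paper splits the sampling step into two cases depending on whether the set of agents with fresh opinions already has size $\ge n/10$, whereas you handle both uniformly with an explicit amplification stage; the substance is identical.
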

\begin{proof}
The lemma follows essentially from the following observation. Let $u$ be an arbitrary but fixed agent and let $A$ be a set of agents with $|A| = \Omega(n)$. Then it follows from Chernoff bounds that in $O(n \log n)$ interactions $u$ interacts with an agent $v \in A$ at least once.

We now give the detailed proof for the correctness of the challenger selection. The defender selection follows by the same arguments.
We call an opinion $j$ \emph{remaining challenger candidate} if the opinion has not participated in a tournament yet.
First we show that the leader agent  selects one of the remaining challenger candidates  in $\ldauOmicron{n\log n}$ interactions \whp.
Then we show that every $\Collector$-agent with opinion $j$ sets its challenger bit in $\ldauOmicron{n\log n}$ interactions \whp.

Let agent $w$ be the leader and let $R$ be the set of agents whose opinions are among the remaining challenger candidates.
If $|R| \geq n/10$ then the probability that the leader $w$ interacts in a fixed step with an agent $v\in R$ is at least constant.
It follows from Chernoff bounds that the leader agent selects a challenger candidate in $\ldauOmicron{n\log n}$ interactions \whp.
Assume $|R| < n/10$.
In this case we first argue that every $\Tracker$-agent $u$ stores the opinion of one the remaining challenger candidates \whp.
This follows from the one-way epidemic process \cite{DBLP:journals/dc/AngluinAE08a} where $R$ is the set of infected agents and the $\Tracker$-agents are susceptible. 
By \cref{lem:preprocessing} it follows that the number of $\Tracker$-agents is at least $n/10$ \whp and hence, the first claim holds.

For the second claim we argue in a similar way.
From the first claim we know that the leader has chosen a challenger opinion $j$ \whp.
The one-way epidemic provides that every $\Tracker$-agent learns the identity of opinion $j$ within $O(n \log n)$ interactions \whp.
Note that we utilize an additional Boolean flag to determine whether a $\Tracker$-agent has already stored the challenger opinion $j$.
It now remains to show every $\Collector$-agent with opinion $j$ interacts at least once with a $\Tracker$-agent \whp.
Again this follows from Chernoff bounds and hence, the second claim holds.

The statement follows from a union bound over both claims.
\end{proof}

Regarding the leader agent we use the leader election protocol from \cite{DBLP:journals/jacm/GasieniecS21} with the phase clock from \cite{DBLP:journals/dc/BerenbrinkEFKKR21}.
We run this protocol among the \Tracker agents.
It requires $\ldauOmicron{\log\log n}$ states and computes a unique leader agent in $\ldauOmicron{n\log^2 n}$ interactions \whp.
Note that the unique leader recognizes when the leader election protocol is concluded. 
This allows us to reuse the states from leader-election and integrate the leader election protocol in an additional, special phase before the tournaments start.

We now describe how we modify \AlgSimple.
The leader election protocol from \cite{DBLP:journals/jacm/GasieniecS21} determines a unique leader agent as follows.
We execute this protocol among the $\Tracker$-agents in a special phase as part of the preprocessing before the first tournament starts.
When a unique leader is elected (\whp), it broadcasts the end of the leader election and initiates the initial defender selection.
The $\Clock$-agents wait in $\Phase{}$ 0 until they receive the signal that a unique leader exists.
The challenger selection is executed at the beginning of a tournament $j$ in Phase~0 and replaces the original challenger selection of \AlgSimple in Lines~2-3 of \cref{alg:setup}.

\begin{proof}[Proof of Statement~(2) of \cref{thm:result-order}]
The result mostly follows from the correctness of \AlgSimple (Statement~(1) of \cref{thm:result-order}).
Again by \cref{lem:preprocessing} it holds that the population is partitioned into the roles $\Player, \Tracker, \Clock$ and $\Collector$, where each role consists of at least $n/10$ agents \whp.
The key modification affects the selection of a unique leader agent, the initial defender opinion, and the challenger opinion for each tournament.

Since the number of $\Tracker$-agents is at least $n/10$, the unique leader agent is computed in $\ldauOmicron{n\log^2{n}}$ interactions \whp by the leader election protocol from \cite{DBLP:journals/jacm/GasieniecS21}.
Then by \cref{lem:challenger-defender-selection} it follows that we have a defender opinion at the beginning of the first tournament \whp.
Similarly to the proof of Statement~(1) of \cref{lem:setup} we can argue with \cref{lem:challenger-defender-selection} that Statement~(1) holds.
It remains to show that the number of states is at most $\ldauOmicron{k + \log n}$. 
The $\Tracker$-agents require $\ldauOmicron{\log\log n}$ states to execute the leader election protocol. 
Until the end of this protocol they do not store any other values. Once the leader election has concluded, they disregard the $O(\log\log n)$ states used for that protocol and use their states to store an opinion instead. Hence, $\ldauOmicron{k + \log n}$ many states are sufficient.

The overall state complexity follows from the proof of Statement~(1) of \cref{thm:result-order} along with the observation that the broadcasts can be implemented using constantly many additional bits.
\end{proof}

\section{Extending the algorithms to large values of k }
\label{sec:appendix-larger-k}

In this section we sketch how \AlgSimple can be adapted to support up to $k<n$ opinions instead of requiring that $k<n/40$.
As a first step, we describe the modification required to support $k < (1-\epsilon) n$, for any arbitrary small constant $\epsilon>0$. The resulting algorithm still respects the number of states and running time stated in \cref{thm:result-order}.
For this, we modify the algorithm at two places: in Line $3$ of \cref{alg:synchronization}, we decrease $\Count u$ by $1/c$ for some large $c$ if $u$ interacts with a collector, and in Line $5$ of \cref{alg:init}, we replace $10$ by some large constant $c'$. This way, there will be a clock agent counting to $5 \log n$ even if only a small (constant) fraction of the collectors has converted its role to a $\Clock$, $\Tracker$ or $\Player$ node. Furthermore, in the chain of inequalities in \cref{claim:lemma1-2}, we can guarantee that 
\[
    \frac{1}{n^2} \sum_{i = 1}^{k} \frac{z_i(t)}{n} 
    \cdot \frac{z_i(t)-1}{n-1}
    = \Omega(\frac{1}{k}).
\]
for $t < \tau(1-\epsilon')$, where $\epsilon'$ is some properly chosen small constant.
Then, we obtain the following adapted statements of  
\cref{lem:preprocessing}:
\begin{enumerate}[nosep]
      \item $\hat{t} = \ldauOmicron{n\cdot (k + \log n)}$.
      \item After interaction $\hat{t}$ each of the roles $\Collector,  \Clock, \Tracker$, and $\Player$ are held by at least $n/C$ agents, for some constant $C$.
      \item After interaction $\hat{t}$ all $\Collector$-agents of opinion $1$ have their defender bit set.
\end{enumerate}
Adapting now the constants in the proof of \cref{thm:result-order} accordingly, we obtain the result for $k < (1-\epsilon) n$, where $\epsilon>0$ can be an arbitrarily small constant.

In order to guarantee the result for any $k < n$, we have to further modify the algorithm. That is, whenever two agents $u$ and $v$ interact, and each of them has exactly one token of the same opinion, then one of them will have two tokens of the same opinion, and the other loses its opinion and becomes a so-called counting agent. The rest of state transitions remain the same for all $\Collector$, $\Clock$, $\Tracker$, and $\Player$-agents being in phase $-1$.  The counting agents start with a counter set to $0$, and every such agent increments its counter if it initiates an interaction with itself (this event occurs with probability $1/n$). If the counter of such an agent hits $C \log n$ for some large $C$, then it broadcasts a message to all agents in the system by converting the phase of any agent to $0$. Additionally, any $\Collector$-agent $u$, which has not interacted with any other agent having the same opinion as $u$, will lose its opinion and is converted to a $\Clock$, $\Tracker$, and $\Player$-agent, each with probability $1/3$. The counting agents convert their role to $\Clock$, $\Tracker$, and $\Player$-agents, each with probability $1/3$. If, however, a $\Clock$-agent switches to phase $0$ first and converts all other agents to phase $0$ as well, then all counting agents will switch their role to $\Clock$, $\Tracker$, and $\Player$-agents with probability $1/3$ each. 
After an agent enters phase $0$, it follows the transitions given in the algorithms in \cref{sec:algorithm}.

In order to argue that this modified algorithm selects the initial majority \whp, we consider three cases. First, assume that $k< n/40$. We know that the number of counting agents can be at most $n/2$. Assume that after $\log^2 n$ rounds, there are no $\Clock$-agents in the system. However, there are at least $n/2$ non-counting agents, and out of these agents, $n/2-\log^2 n$ must be $\Collector$-agents, as otherwise, at least one $\Collector$-agent has converted its role to $\Clock$-agent \whp. Now, we assign each $\Collector$-agent $u$ a matching $\Collector$-agent $v$ having the same opinion as $u$. As there are at most $n/40$ different opinions, we can create $(n/2-log^2 n-n/40)/2$ matching pairs. Since within $n$ interactions, an agent selects another agent with constant probability, at least one $\Collector$-agent selects its matching $\Collector$-agent with constant probability. Thus, in $n$ consecutive interactions, a $\Clock$-agent is created with constant probability. Repeating this argument over $\log n$ rounds consisting of $n$ interactions each, we obtain that \whp a $\Clock$-agent is created. Since we know that an algorithm exists that guarantees the bounds of   
\cref{thm:result-order} for $k < (1-\epsilon) n$, the $\Clock$-agent will trigger phase $0$ after $O(n(k+\log n))$ interactions \whp.

In the second case, we assume that $n/40 \leq k \leq 6n/10$. Here, we consider two subcases. First, assume that there are initially at most $\log^2 n$ agents, which have opinions supported by at least $3$ agents initially. This, however, implies that more than $n/10$ agents have opinions supported by exactly $2$ agents initially. Out of these agents, either at least $n/40$ will convert to counting agents within $O(n \log n)$ rounds or a $\Clock$-agent triggers phase $0$ before that (at that point in time, at least a constant fraction of the initial $\Collector$-agents had to switch their roles). If the first case occurs, then after additional $O(n \log n)$ rounds one of the counting agents triggers phase $0$, and every counting agent is converted to a $\Clock$, $\Tracker$, and $\Player$-agent. Thus, each of these roles is present in a constant fraction of agents \whp. 

In the second subcase, we assume that there are more than 
$\log^2 n$ agents, which have opinions supported by at least $3$ agents initially. Out of these agents, there will be created at least one $\Clock$-agent in $O(n \log n)$ rounds. Since the algorithm ensures that this $\Clock$-agent switches to phase $0$ within additional $O(n)$ rounds, if $k < (1-\epsilon)n$, the claim follows.

In the third case, we assume that $k > 6n/10$. Since there is at least $1$ opinion with support at least $2$, and the number of opinions with support $1$ is at least $n/10$, we obtain that within $O(n \log n)$ rounds at least one counting agent is created. This counting agent triggers phase $0$ after additional $O(n \log n)$ rounds, and all agents with opinions having support $1$ switch their role to $\Clock$, $\Tracker$, and $\Player$-agents. All these roles are supported by at least a constant fraction of the agents \whp. Thus, the algorithm will select the initial majority \whp. 




\section{Auxiliary Results}
\label{apx:auxiliary-results}

In this appendix we present known results used in our analysis.
We start with classical Chernoff bounds.
\begin{theorem}[{\cite[Theorems~4.4,~4.5]{DBLP:books/daglib/0012859}}]
\label{lemma:chernoff_poisson_trials}
\label{thm:chernoff-bounds}
    Let $X_1, \dots, X_n$ be independent Poisson trials with $\Pr[X_i = 1] = p_i$ and let $X = \sum X_i$ with $\Ex{X} = \mu$. Then the following Chernoff bounds hold
    for $0 < \delta \leq 1$:
    \begin{align*}
        \Pr[X > (1+\delta) \cdot\mu] &\leq e^{-\mu \cdot {\delta}^2 / 3}, \quad \text{and} \\
        \Pr[X < (1-\delta) \cdot\mu] &\leq e^{-\mu \cdot {\delta}^2 / 2}.
    \end{align*}
\end{theorem}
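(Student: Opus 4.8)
The plan is to use the standard exponential-moment (Chernoff) method: bound the moment generating function of $X$, apply Markov's inequality, and then optimize the free parameter. Both tails follow the same template and differ only in the sign of that parameter.

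For the upper tail I would fix a parameter $t > 0$ and exploit independence to factor $\Ex{e^{tX}} = \prod_{i=1}^{n} \Ex{e^{t X_i}}$. Since each $X_i$ takes value $1$ with probability $p_i$ and $0$ otherwise, $\Ex{e^{t X_i}} = 1 + p_i(e^t - 1) \le \exp\bigl(p_i(e^t - 1)\bigr)$, where the inequality is the elementary $1 + x \le e^x$. Taking the product and using $\sum_i p_i = \mu$ gives $\Ex{e^{tX}} \le \exp\bigl(\mu(e^t - 1)\bigr)$. Markov's inequality applied to $e^{tX}$ then yields $\Pr[X > (1+\delta)\mu] \le \exp\bigl(\mu(e^t - 1) - t(1+\delta)\mu\bigr)$, and substituting the minimizer $t = \ln(1+\delta)$ collapses this to $\Pr[X > (1+\delta)\mu] \le \bigl(e^{\delta}/(1+\delta)^{1+\delta}\bigr)^{\mu}$. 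The lower tail is symmetric: running the same argument with a parameter $t < 0$ (equivalently, applying Markov to $e^{-sX}$ for $s > 0$, using $\Ex{e^{-sX_i}} \le \exp\bigl(p_i(e^{-s}-1)\bigr)$) and choosing $s = -\ln(1-\delta)$ produces $\Pr[X < (1-\delta)\mu] \le \bigl(e^{-\delta}/(1-\delta)^{1-\delta}\bigr)^{\mu}$.

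It remains to simplify these two raw bounds into the stated forms, which is the only part requiring genuine work. After taking logarithms, the two claims reduce to the scalar inequalities $\delta - (1+\delta)\ln(1+\delta) \le -\delta^2/3$ and $-\delta - (1-\delta)\ln(1-\delta) \le -\delta^2/2$, both on the range $0 < \delta \le 1$. I would verify each by expanding the logarithm as a power series: for the lower tail one finds $-\delta - (1-\delta)\ln(1-\delta) = -\delta^2/2 - \delta^3/6 - \cdots$, which is at most $-\delta^2/2$ for every $0 < \delta < 1$; for the upper tail one finds $\delta - (1+\delta)\ln(1+\delta) = -\delta^2/2 + \delta^3/6 - \cdots$, and a term-by-term comparison shows this is at most $-\delta^2/3$ precisely once $\delta \le 1$. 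This last analytic step — pinning down the two transcendental inequalities with the correct constants $1/3$ and $1/2$ — is the main (and essentially only) obstacle; the restriction $\delta \le 1$ in the hypothesis is exactly what the cruder upper-tail constant $1/3$ requires, while everything preceding this step is routine.
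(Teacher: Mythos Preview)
Your argument is the standard and correct proof of the Chernoff bounds. Note, however, that the paper does not actually prove this statement: it is listed in the appendix of auxiliary results and simply cited from the textbook of Mitzenmacher and Upfal, so there is no ``paper's own proof'' to compare against. Your derivation via the moment generating function, Markov's inequality, the optimal choice of the exponent, and the final scalar inequalities for the constants $1/3$ and $1/2$ is exactly the textbook route and is sound.
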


\noindent Next we consider tail bounds for sums of geometrically distributed random variables.
\begin{theorem}[{\cite[Theorem~2.1]{arXiv:1709.08157}}]
\label{lem:janson}
Let $X = \sum_{i=1}^{n} X_i$ where $X_i, i = 1,\dots,n$, are independent geometric random variables with $X_i \sim Geo(p_i)$ for $p_i \in (0,1]$.
For any $\lambda \geq 1$,
\begin{equation*}
    \Prob{X \geq \lambda \cdot \Ex{X}} \leq \exp({- \min_{i}\{p_i\} \cdot \Ex{X} \cdot (\lambda - 1 - \ln{\lambda})}).
\end{equation*}
\end{theorem}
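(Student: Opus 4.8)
The plan is to prove this as a Chernoff-type (exponential moment) tail bound, optimizing over a single parameter $t>0$ and collapsing the entire product of moment generating functions into a quantity depending only on $p_\ast := \min_i p_i$ and $\mu := \Ex{X} = \sum_{i=1}^n 1/p_i$. First I would apply Markov's inequality to $e^{tX}$: for every admissible $t$,
\[
\Pr[X \ge \lambda\mu] \le e^{-t\lambda\mu}\,\Ex{e^{tX}} = e^{-t\lambda\mu}\prod_{i=1}^n \Ex{e^{tX_i}},
\]
using independence. Recalling that $\Pr[X_i = m] = (1-p_i)^{m-1}p_i$ for $m\ge 1$, the geometric moment generating function is $\Ex{e^{tX_i}} = p_i e^t / (1-(1-p_i)e^t)$, valid whenever $(1-p_i)e^t < 1$.

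The key per-variable step is to dominate each geometric factor by the corresponding exponential factor, that is, to show
\[
\Ex{e^{tX_i}} \le \frac{1}{1 - t/p_i} \qquad \text{for } 0 \le t < p_i.
\]
After clearing denominators this inequality reduces to $e^t(1-t) \le 1$, which holds for all real $t$ because $1-t \le e^{-t}$; I would also note that $t<p_i$ already forces $(1-p_i)e^t<1$, since $p_i \le -\ln(1-p_i)$, so both denominators remain positive. Taking logarithms then yields, for $0 \le t < p_\ast$,
\[
\Pr[X \ge \lambda\mu] \le \exp\!\Big(-t\lambda\mu - \sum_{i=1}^n \ln(1 - t/p_i)\Big).
\]

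To eliminate the dependence on the individual $p_i$, I would substitute $t = p_\ast s$ with $s\in[0,1)$ and set $w_i := p_\ast/p_i \in (0,1]$, so that $\sum_i w_i = p_\ast\mu$. The elementary inequality $-\ln(1-w s) \le -w\ln(1-s)$ for $w\in(0,1]$ and $s\in(0,1)$ — which follows from convexity of $w \mapsto -\ln(1-ws)$ together with its vanishing at $w=0$ — gives $-\sum_i \ln(1-w_i s) \le -(p_\ast\mu)\ln(1-s)$, so the exponent is at most $-p_\ast\mu\,(\lambda s + \ln(1-s))$. Choosing $s = 1 - 1/\lambda \in [0,1)$, the maximizer of $\lambda s + \ln(1-s)$ (legitimate since $\lambda\ge 1$), evaluates the bracket to $\lambda - 1 - \ln\lambda$, and since the corresponding $t = p_\ast(1-1/\lambda) < p_\ast$ lies in the admissible range, the claimed bound $\exp(-p_\ast\mu(\lambda-1-\ln\lambda))$ follows.

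The Chernoff setup and the closing one-variable optimization are routine; the two steps that carry the argument are the per-geometric comparison to the exponential (which, conveniently, collapses to the standard inequality $e^t(1-t)\le 1$) and the convexity reduction that folds all the distinct parameters $p_i$ into $p_\ast\mu$. I expect the convexity step to be the main obstacle to phrase cleanly, as it is precisely what converts a bound involving every $p_i$ into one governed solely by the smallest parameter, matching the stated form of the theorem.
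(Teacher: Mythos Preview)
Your argument is correct: the Chernoff setup, the per-variable comparison $\Ex{e^{tX_i}}\le (1-t/p_i)^{-1}$ reducing to $e^t(1-t)\le 1$, the convexity step $-\ln(1-ws)\le -w\ln(1-s)$ for $w\in(0,1]$, and the final optimization $s=1-1/\lambda$ all check out. Note, however, that the paper does not prove this theorem at all; it merely quotes it from \cite{arXiv:1709.08157} as an auxiliary tool, so there is no in-paper proof to compare against. Your derivation is essentially the one given in Janson's original paper, so you have independently reconstructed the source argument.
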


The following statement bounds the hitting time for biased random walks. Similar results have already been shown, e.g., in \cite{DBLP:journals/siamcomp/BerenbrinkCSV06,book/mixingtimes,book:Feller-probability}.
For convenience, we give here a combined version of these standard results that fits our needs.
\begin{lemma}[Random Walk Hitting Time]
\label{lem:random-walk}
Let $(W_t)_{t \in \mathbb{N}}$ be a biased random walk on state space $\mathbb{N}_0$, initially at $0$. Let $0 < p < 1$ denote the probability for the walk to move to the right (increase its current position by 1). Conversely, let $q=(1-p)$ denote the probability that it moves to the left (or stays in position in case it currently resides at position $0$). Then, for any $N>0$ and hitting time $\tau_N = \min\{t ~|~ W_t = N\}$ the following holds:
\begin{enumerate}[nosep]
    \item If $p>q$ then $\tau_N \leq (\frac{2}{p-q})^2 \cdot N$ with probability at least $1-\exp(-N)$.
    \item If $p<q$ then $\tau_N \geq (q/p)^{N/2}$ with probability at least $1-(p/q)^{N/2}$. 
\end{enumerate}
\end{lemma}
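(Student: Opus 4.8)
The plan is to treat the two parts with different tools: a monotone coupling with an unconstrained $\pm1$ walk for the positive-drift upper bound, and a decomposition into excursions away from $0$ together with the gambler's ruin formula for the negative-drift lower bound.

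For Statement~(1) ($p>q$), I would first remove the reflection at $0$. Couple $(W_t)$ with an unconstrained walk $(S_t)$ on $\mathbb{Z}$, both started at $0$ and driven by the same i.i.d.\ increments $\xi_1,\xi_2,\dots\in\{+1,-1\}$ with $\Pr[\xi_i=+1]=p$. Since $W_t=\max(W_{t-1}+\xi_t,0)$, a short induction gives $W_t\ge S_t$ for all $t$; and because $W$ starts below $N$ and moves by at most $1$ per step, $\tau_N=\inf\{t:W_t\ge N\}\le\inf\{t:S_t\ge N\}$, so $\Pr[\tau_N>T]\le\Pr[\max_{t\le T}S_t<N]\le\Pr[S_T<N]$. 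Writing $S_T=2B_T-T$ with $B_T\sim\mathrm{Bin}(T,p)$ turns $\{S_T<N\}$ into the binomial lower-tail event $\{B_T<(T+N)/2\}$, and choosing $T=\bigl(2/(p-q)\bigr)^2 N$ makes the deviation between $\Ex{B_T}=pT$ and $(T+N)/2$ a constant-order fraction of $\Ex{B_T}$; a Chernoff bound (\cref{thm:chernoff-bounds}) then yields $\Pr[S_T<N]\le e^{-N}$.

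For Statement~(2) ($p<q$) the walk drifts downward, and the coupling above goes the wrong way (reflection only speeds up reaching $N$), so I would instead cut the path at successive visits to~$0$. Let $J$ be the number of the first excursion — a maximal stretch of the path between leaving $0$ and returning to $0$ — during which the walk attains level $N$. By the strong Markov property these excursions are i.i.d., each reaching $N$ before returning to $0$ with probability $\rho=\frac{(q/p)-1}{(q/p)^N-1}$ by the gambler's ruin formula on $\{0,\dots,N\}$; hence $J\sim\mathrm{Geo}(\rho)$, and from $(q/p)^N-1\ge\bigl((q/p)-1\bigr)(q/p)^{N-1}$ one gets $\rho\le(p/q)^{N-1}$. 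Every completed (failed) excursion costs at least two steps, so $\tau_N\ge 2(J-1)$ deterministically, whence $\Pr[\tau_N<T]\le\Pr[J<T/2+1]\le(T/2+1)\rho$; taking $T=(q/p)^{N/2}$ and simplifying gives $\Pr[\tau_N<(q/p)^{N/2}]\le(p/q)^{N/2}$.

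The routine ingredients — the domination $W_t\ge S_t$, the distribution of $J$, and the gambler's ruin probability — are all standard; the real care is in making the advertised constants land. For~(1) one has to verify that $T=\bigl(2/(p-q)\bigr)^2 N$ already produces a binomial deviation large enough \emph{uniformly in $p$}: near $p=\tfrac12$ the relevant scale is the variance $\Theta(T)$, where the multiplicative Chernoff bound is essentially tight, while for $p$ bounded away from $\tfrac12$ there is ample slack, so a slightly sharper-than-textbook tail estimate (or a marginally larger absolute constant) may be needed. For~(2) one must check that the excursions being counted are exactly the i.i.d.\ copies furnished by the strong Markov property, and that the crude bound of at most $T/2$ of them within $T$ steps, combined with $\rho\le(p/q)^{N-1}$, still beats $(p/q)^{N/2}$ after absorbing a harmless factor of $q/p$ (equivalently, the statement may be phrased with an arbitrary constant in place of the exponent $N/2$, which is all the applications require). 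Beyond this constant-chasing I expect no genuine obstacle.
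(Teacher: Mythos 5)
Your proof follows essentially the same route as the paper's: part (1) is the identical coupling $W_t \ge S_t$ with the unconstrained $\pm 1$ sum (the paper applies Hoeffding's inequality rather than the multiplicative Chernoff bound, which is precisely what makes the constant $(2/(p-q))^2$ land uniformly in $p$ and confirms the caveat you raise), and part (2) is the same gambler's-ruin decomposition, phrased as excursions from $0$ rather than the paper's i.i.d.\ instances started at budget $1$ and targeting $N+1$. The only discrepancy is the residual factor of order $q/p$ in your part-(2) failure probability, which you already flag as absorbable and which the paper sidesteps by using the ruin probability $\le (p/q)^{N}$ per instance and counting one step per instance; this is cosmetic and immaterial for the lemma's use in the paper.
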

\begin{proof}
    We start with the first statement and assume $p>q$. We use a similar idea as in Lemma 3.3 of \cite{DBLP:journals/siamcomp/BerenbrinkCSV06}. That is, we let $X_i$ denote a random variable with $X_i = -1$ if the random walk moves to the left, and $X_i = 1$ if it moves to the right in step $i$. Observe that $S_m = \sum_{i=1}^{m} X_i$ minorizes the position $W_m$ of the random walk for any $m \geq 0$. We set $m= (2 /p-q)^2 N$ and apply Hoeffding's bound (Theorem 4.12 of \cite{DBLP:books/daglib/0012859}).  As $-1 \leq X_i \leq 1$ this yields for any $t \geq 0$ that 
    \[
        \Pr \left[ S_m \leq \Ex{S_m} - t \right] \leq \exp(-2t^2 / 4m).
    \]
    Setting $t=\Ex{S_m} - N= m (p-q) - N \geq 0$ this yields that
    \begin{align*}
        \Pr \left[ S_m \leq N \right] &\leq \exp\left(- \frac{(m(p-q) - N)^2}{2m} \right) = \exp\left(-\frac{m(p-q)^2}{2} + N(p-q) - \frac{N^2}{2m} \right) \\
        & = \exp\left(-2N + N(p-q) - \frac{N^2}{2m}\right) \leq \exp(-N).
    \end{align*}
    As  $S_m$ minorizes $W_m$, this implies that the random walk must have hit $N$ before step $m$ with probability at least $1 - \exp(-N)$.
    
    In order to show the second statement, we assume $q<p$ and couple our process with a sequence of gamblers ruin instances. The gambler starts with $1$ money and repeatedly gambles: either it wins $1$ money with probability $p$ or loses $1$ money with probability $q$. The gambler continues until it either runs out of money or reaches a budget of $N+1$. Assume our random walk currently resides at position $0$. We couple its next moves with a gamblers ruin process as follows: the random walk moves to the right each time the gambler wins, otherwise it moves to the left. If the gambler reaches budget $N+1$, then this implies that the random walk hit $N$ before going back to $0$. Otherwise, the gambler runs broke which implies that the random walk is again back at $0$. Hence, we may lower-bound $\tau_N$ by the number of gamblers ruin instances required for the gambler to hit $N+1$ for the first time. According to \cite{book:Feller-probability}  the player reaches the desired budget with probability
    \[
        \frac{\frac{q}{p} - 1}{(\frac{q}{p})^{N+1} - 1} \leq (\frac{p}{q})^{N}.
    \]
    We now apply union bounds, which implies that the gambler wins in any of the first $(q/p)^{N/2}$ instances with probability at most
    \[
        (q/p)^{N/2} \cdot (\frac{p}{q})^{N} = \frac{p}{q}^{N/2}. \qedhere
    \]
    Each gamblers ruin instance corresponds to at least one move of the random walk. Therefore, the number of required gamblers ruin instances serves as a lower bound for the hitting time.
\end{proof}

\end{document}